\patchcmd{\abstract}{3pc}{25mm}{}{}
\newtheorem{theorem}{Theorem}[section]
\newtheorem{lemma}[theorem]{Lemma}
\theoremstyle{definition}
\newtheorem{algorithm}[theorem]{Algorithm}
\numberwithin{equation}{section}
\renewcommand{\algorithmicensure}{\textbf{Result:}}
\renewcommand{\textproc}{\textsf}
\def\vec#1{\mathchoice{\mbox{\boldmath$\displaystyle#1$}}
{\mbox{\boldmath$\textstyle#1$}}
{\mbox{\boldmath$\scriptstyle#1$}}
{\mbox{\boldmath$\scriptscriptstyle#1$}}}
\def\ceil#1{\mathchoice{\mbox{$\displaystyle\left\lceil#1\right\rceil$}}
{\mbox{$\textstyle\lceil#1\rceil$}}
{\mbox{$\scriptstyle\lceil#1\rceil$}}
{\mbox{$\scriptscriptstyle\lceil#1\rceil$}}}
\def\floor#1{\mathchoice{\mbox{$\displaystyle\left\lfloor#1\right\rfloor$}}
{\mbox{$\textstyle\lfloor#1\rfloor$}}
{\mbox{$\scriptstyle\lfloor#1\rfloor$}}
{\mbox{$\scriptscriptstyle\lfloor#1\rfloor$}}}
\newcommand{\Z}{\mathbb{Z}}
\newcommand{\N}{\mathbb{N}}
\newcommand{\C}{\mathbb{C}}
\newcommand{\A}{\mathcal{R}}
\newcommand{\bigO}{O}
\newcommand{\set}{\leftarrow}
\newcommand{\x}[1]{x_{#1}}
\newcommand{\concat}{\mathbin{\Vert}}
\DeclareMathOperator{\DFT}{DFT}
\DeclareMathOperator{\TFT}{TFT}
\DeclareMathOperator{\ord}{ord}
\DeclareMathOperator{\Div}{div}
\begin{document}
\title{An in-place truncated Fourier transform}
\author{Nicholas Coxon}
\email{coxon.nv@gmail.com}
\date{\today}
\keywords{fast Fourier transform, truncated Fourier transform, in-place algorithms}

\begin{abstract}
We show that simple modifications to van der Hoeven's forward and inverse truncated Fourier transforms allow the algorithms to be performed in-place, and with only a linear overhead in complexity.
\end{abstract}

\maketitle

\section{Introduction}

The discrete Fourier transform (DFT) with respect to a principal $n$th root of unity is a linear map that evaluates polynomials of degree less than $n$ at each of the~$n$ distinct powers of the root of unity. A naive approach allows the map to be evaluated with $\bigO(n^2)$ arithmetic operations. However, when $n$ is comprised entirely of small prime factors, the DFT can be evaluated with $\bigO(n\log n)$ operations by the fast Fourier transform (FFT) algorithm. While already known to Gauss~\cite{gauss1866}, the FFT algorithm only received widespread attention after its rediscovery~\cite{heideman1985,cooley1987} by Cooley and Tukey~\cite{cooley1965}. The algorithm has since found numerous applications in mathematics, computer science and electrical engineering. In the area of computer algebra, the FFT algorithm provides the basis for asymptotically fast algorithms for integer and polynomial multiplication~\cite{schonage1971,cantor1991}.

The radix-2 Cooley-Tukey FFT algorithm is widely used due to its ease of implementation and fast practical performance. However, the radix-2 algorithm requires the order $n$ of the root of unity to be a power of two, limiting control over the number of evaluation points of the corresponding transform. This requirement leads to unwanted ``jumps'' in the complexity for certain applications, e.g., polynomial multiplication, since polynomials may have to be evaluated at almost twice as many points than actually required by the problem at hand. Truncated (or pruned) variants of the radix-2 algorithm address this issue by considering more general problems involving the evaluation of polynomials with restricted support at subsets of the evaluation points. The algorithms then disregard parts of the radix-2 algorithm that are vacuous due to zero coefficients, or that do not contribute to the computation of the desired output, resulting in complexities which vary relatively smoothly with the size of the inputs and outputs.

The truncated Fourier transform (TFT) algorithm of van der Hoeven~\cite{hoeven2004,hoeven2005} modifies the radix-2 algorithm so that it evaluates polynomials of degree less than $\ell$, where $\ell\leq n$, at the $\ell$ powers of the root of unity whose exponents are smallest under a reversal of their bit representation. The algorithm retains the simplicity of the radix-2 algorithm, while enjoying a time complexity which varies relatively smoothly with~$\ell$. More specifically, the algorithm performs $(\ell/2)\log_2\ell+\bigO(\ell)$ ring multiplications, and $\ell\log_2\ell+\bigO(\ell)$ ring additions or subtractions. Van der Hoeven also provides a second algorithm, called the inverse truncated Fourier transform~(ITFT), which performs the inverse transformation of interpolation with the same complexity. Together, the algorithms allow for more efficient implementations of FFT-based polynomial multiplication algorithms by reducing the size of the jumps in their complexities.

Van der Hoeven's TFT and ITFT algorithms operate on arrays of $2^{\ceil{\log_2\ell}}$ ring elements, resulting in jumps in their space complexities as $\ell$ increases. Harvey and Roche~\cite{harvey2010} address this issue by describing in-place variants of the algorithms, which operate by updating their length $\ell$ input arrays through replacements until they are overwritten with the output, while storing only a constant number of ring elements and bounded-precision integers in auxiliary storage space. Their variants retain the $\bigO(\ell\log\ell)$ time complexity of van der Hoeven's algorithms, albeit with a larger implied constant. Arnold~\cite{arnold2013} subsequently describes in-place algorithms that match the complexity of van der Hoeven's algorithms up to the implied constants in the $\bigO(\ell)$ terms. The algorithms are based on existing in-place algorithms~\cite{sergeev2011} for computing a different truncated transform~\cite[Chapter~6]{mateer2008}, which Arnold improves and then modifies to compute the transforms of van der Hoeven's algorithms.

In Section~\ref{sec:tft} of this paper, we present an in-place TFT algorithm that performs at most $(\ell/2)\floor{\log_2\ell}+2\ell+\bigO(\log^2\ell)$ ring multiplications, and at most $\ell\floor{\log_2\ell}+2\ell$ ring additions or subtractions. The algorithm simply injects an additional $\bigO(\ell)$ computation into the middle of a slightly restructured version of van der Hoeven's TFT algorithm. This approach is arguably more natural than the one used by Arnold, which we see reflected in the fact that our algorithm performs $\Omega(\ell)$ fewer multiplications than Arnold's algorithm when $\ell$ is not a power of two. In Section~\ref{sec:itft}, we show that our TFT algorithm is easily inverted, yielding an in-place ITFT algorithm that performs at most $(\ell/2)\floor{\log_2\ell}+4\ell+\bigO(\log^2\ell)$ ring multiplications, and at most $\ell\floor{\log_2\ell}+3\ell$ ring additions or subtractions.

\section{Preliminaries}

\subsection{The discrete Fourier transform}

Let $\A$ be a commutative ring with identity. Then $\omega\in\A$ is called a principal $n$th root of unity if $\omega^n=1$ and $\sum^{n-1}_{j=0}\omega^{ij}=0$ for $i\in\{1,\dotsc,n-1\}$. The discrete Fourier transform (DFT) with respect to a principal $n$th root of unity $\omega\in\A$ is the $\A$-linear map $\DFT_\omega:\A^n\rightarrow\A^n$ defined by $(a_0,\dotsc,a_{n-1})\mapsto(\hat{a}_0,\dotsc,\hat{a}_{n-1})$ where
\begin{equation*}
\hat{a}_i
=
\sum^{n-1}_{j=0}
a_j
\omega^{ij}
\quad\text{for $i\in\{0,\dotsc,n-1\}$}.
\end{equation*}
It is readily verified that $\omega^{-1}=\omega^{n-1}$ is also a principal $n$th root of unity, and $\DFT_{\omega^{-1}}(\DFT_\omega(\vec{a}))=n\vec{a}$ for $\vec{a}\in\A^n$. Thus, $\DFT_\omega$ is injective if and only if $n$ is not a zero-divisor in $\A$.

Hereafter, we assume that $\omega\in\A$ is a principal $n$th root of unity such that $n=2^p$ for some $p\geq 1$. We further assume that $\omega^{n/2}=-1$, in order to simplify the algorithms presented in this paper. When $n$ is a power of two, satisfying this condition is sufficient for $\omega\in\A$ to be an $n$th principal root of unity; the condition is also necessary if $2$ is not a zero-divisor in $\A$.

\subsection{Computational model}\label{sec:computational_model}

We adhere to the computational model laid out by Harvey and Roche~\cite[Section~2.1]{harvey2010}. Accordingly, two primitive types are allowed in memory: elements of $\A$ and integers in the interval $[-c\ell,c\ell]$, where $\ell$ is the length of the transform being computed, and $c\in\N$ is a fixed constant (like Harvey and Roche, we may take $c=2$ for our algorithms). An algorithm is then said to be in-place if it operates by updating its given inputs until they are overwritten with the corresponding outputs. The algorithm may use extra working space during the computation, which is defined to be the auxiliary space used by the algorithm. However, only $\bigO(1)$ auxiliary space is allowed.

We measure the time complexity of algorithms by the number algebraic operations they perform, i.e., by the number of multiplications, additions and subtractions performed in $\A$. Thus, we ignore the cost of operations on indices. However, these costs are negligible compared to those of the algebraic operations for the algorithms in this paper. At times, we provide a more detailed analysis by separately counting the number of multiplications by powers of $\omega$, and by powers of $2$ or its inverse. Multiplications by powers of $\omega$ are significant in settings where the root has been artificially added to the ring~\cite{schonage1971,cantor1991}, since they enjoy a smaller cost than arbitrary multiplications. Similarly, the cost of a multiplication by $2$ is always bounded by that of an addition, while for certain rings (e.g., $\A=\C$) multiplications by powers of $2$ or its inverse can be efficiently implemented by simple bit operations (e.g., bit-shifts).

\subsection{The radix-2 fast Fourier transform}\label{sec:FFT}

Along with the rediscovery of the FFT algorithm, Cooley and Tukey~\cite{cooley1965} introduced an in-place variant of the radix-2 algorithm. A feature of the algorithm is that it requires a ``bit-reversal'' permutation to be performed either at its beginning or at its end. For $k\in\N$, the $k$-bit bit-reversal permutation $[\,{}\,]_k:\{0,\dotsc,2^k-1\}\rightarrow\{0,\dotsc,2^k-1\}$ is defined by
\begin{equation*}
	2^0i_0+2^1i_1+\dotsb+2^{k-1}i_{k-1}
	\mapsto
	2^0i_{k-1}+2^1i_{k-2}+\dotsb+2^{k-1}i_0
\end{equation*}
for $i_0,\dotsc,i_{k-1}\in\{0,1\}$. The in-place radix-2 FFT algorithm is then derived from the following result of Cooley and Tukey (see also~\cite[Section~2]{hoeven2004}).

\begin{lemma}\label{lem:radix-2_FFT} Let $\vec{a}=(a_0,\dotsc,a_{n-1})\in\A^n$, and recursively define $a_{k,0},\dotsc,a_{k,n-1}\in\A$ for $k\in\{0,\dotsc,p\}$ as follows: $a_{p,i}=a_i$ for $i\in\{0,\dotsc,n-1\}$; and, if $k<p$,
\begin{equation}\label{eqn:butterfly}
	\begin{pmatrix}
		a_{k,2^k(2i  )+j} \\
		a_{k,2^k(2i+1)+j}
	\end{pmatrix}
	=
	\begin{pmatrix}
		1 &  \omega^{[2i]_p} \\
		1 & -\omega^{[2i]_p}
	\end{pmatrix}
	\begin{pmatrix}
		a_{k+1,2^k(2i  )+j} \\
		a_{k+1,2^k(2i+1)+j}
	\end{pmatrix}
\end{equation}
for $i\in\{0,\dotsc,2^{p-k-1}-1\}$ and $j\in\{0,\dotsc,2^k-1\}$. Then $\DFT_\omega(\vec{a})=(a_{0,[0]_p},a_{0,[1]_p},\dotsc,a_{0,[n-1]_p})$.
\end{lemma}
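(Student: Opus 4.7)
My plan is to prove by downward induction on $k\in\{p,p-1,\dotsc,0\}$ the following explicit formula for the intermediate values:
\begin{equation*}
a_{k, 2^k m + j} = \sum_{l=0}^{2^{p-k}-1} a_{2^k l + j}\,\omega^{[m]_{p-k}\cdot 2^k l}
\quad\text{for $0\leq m<2^{p-k}$, $0\leq j<2^k$.}
\end{equation*}
The base case $k=p$ is immediate: the sum collapses to the single term $a_j$, matching the definition $a_{p,j}=a_j$. At $k=0$, the formula reads $a_{0,m}=\sum_{l=0}^{n-1}a_l\,\omega^{[m]_p l}=\hat{a}_{[m]_p}$; since $[\,\cdot\,]_p$ is an involution on $\{0,\dotsc,n-1\}$, this is equivalent to $\hat{a}_i=a_{0,[i]_p}$, which is the stated conclusion.

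For the inductive step from level $k+1$ to level $k$, I fix $i\in\{0,\dotsc,2^{p-k-1}-1\}$ and $j\in\{0,\dotsc,2^k-1\}$. The two updated indices $2^{k+1}i+j$ and $2^{k+1}i+2^k+j$ correspond to $m=2i$ and $m=2i+1$ in the level-$k$ invariant. After expanding the two inputs on the right-hand side of \eqref{eqn:butterfly} via the inductive hypothesis, I match them against the target level-$k$ sums by splitting the range $0\leq l'<2^{p-k}$ according to the parity of $l'$, writing $l'=2l$ or $l'=2l+1$. Comparison of coefficients then reduces the step to three bit-reversal identities:
\begin{equation*}
[2i]_{p-k}=[i]_{p-k-1},\qquad [2i+1]_{p-k}=[i]_{p-k-1}+2^{p-k-1},\qquad 2^k[i]_{p-k-1}=[2i]_p,
\end{equation*}
each of which follows directly from the bit-level definition of $[\,\cdot\,]_{(\cdot)}$.

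The main obstacle, and the only point requiring care, is confirming that the specific twiddle $\omega^{[2i]_p}$ appearing in \eqref{eqn:butterfly} is precisely the factor that reconciles the $(p-k-1)$-bit reversals produced by the inductive hypothesis with the $(p-k)$-bit reversals demanded at the next level. The third identity above furnishes this reconciliation for the ``even'' row of the butterfly, while the ``odd'' row uses the second identity together with $\omega^{2^{p-k-1}\cdot 2^k}=\omega^{n/2}=-1$ to absorb the sign that distinguishes it from the even row. Once these three index identities are in hand, the inductive step is a routine rearrangement of sums.
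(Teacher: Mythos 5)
The paper states Lemma~\ref{lem:radix-2_FFT} without proof, citing Cooley--Tukey~\cite{cooley1965} and van der Hoeven~\cite{hoeven2004}, so there is no in-paper argument to compare against. Your proof is correct and complete: the invariant
\begin{equation*}
a_{k,2^km+j}=\sum_{l=0}^{2^{p-k}-1}a_{2^kl+j}\,\omega^{[m]_{p-k}\,2^kl}
\end{equation*}
is the standard decimation-in-frequency characterisation of the intermediate levels, the base case $k=p$ and the read-off at $k=0$ are both handled correctly, and the inductive step reduces exactly to the three bit-reversal identities you list. I verified those identities; in particular, $2^k[i]_{p-k-1}=[2i]_p$ is what makes the fixed twiddle $\omega^{[2i]_p}$ in \eqref{eqn:butterfly} match the $2^k$-scaled exponent coming from splitting the level-$k$ sum by parity, and the odd row indeed absorbs the extra $2^{p-k-1}$ in $[2i+1]_{p-k}$ via $\omega^{2^{p-k-1}\cdot 2^k}=\omega^{n/2}=-1$, which is precisely the hypothesis the paper imposes on $\omega$. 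One could fill in the routine sum-rearrangement, but the structure as sketched is sound and matches what a reader would expect from the cited sources.
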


%

Lemma~\ref{lem:radix-2_FFT} immediately yields a $\bigO(n\log n)$ algorithm for evaluating $\DFT_\omega$. To make the algorithm in-place, the input vector $(a_0,\dotsc,a_{n-1})$ is successively overwritten with $a_{k,0},\dotsc,a_{k,n-1}$ for $k=p-1,p-2,\dotsc,0$ by using~\eqref{eqn:butterfly} to overwrite pairs of entries. The vector is then permuted in-place to obtain the correct ordering, which simply involves swapping $n/2$ pairs of entries since the bit-reversal permutation is an involution. It is necessary to compute the multipliers $\omega^{[2i]_p}$ of the transformations~\eqref{eqn:butterfly}, commonly referred to as ``twiddle factors'', using only $\bigO(1)$ auxiliary space. This requirement can be met by performing the transformations  successively for $i=[0]_{p-k-1},[1]_{p-k-1},\dotsc,[2^{p-k-1}-1]_{p-k-1}$, since the corresponding twiddle factors are then successive powers of~$\omega^{2^k}$. More simply, the input vector can be permuted according to the bit-reversal permutation at the beginning of the algorithm, rather than at its end, since~\eqref{eqn:butterfly} implies that
\begin{equation*}
	\begin{pmatrix}
		a_{k,[2^{p-k-1}(2j  )+i]_p} \\
		a_{k,[2^{p-k-1}(2j+1)+i]_p}
	\end{pmatrix}
	=
	\begin{pmatrix}
		1 &  \omega^{2^ki} \\
		1 & -\omega^{2^ki}
	\end{pmatrix}
	\begin{pmatrix}
		a_{k+1,[2^{p-k-1}(2j  )+i]_p} \\
		a_{k+1,[2^{p-k-1}(2j+1)+i]_p}
	\end{pmatrix}
\end{equation*}
for $k\in\{0,\dotsc,p-1\}$, $i\in\{0,\dotsc,2^{p-k-1}-1\}$ and $j\in\{0,\dotsc,2^k-1\}$. Using either method, the cost of computing the twiddle factors is $n+\bigO(\log^2n)$ multiplications by powers of $\omega$.

\begin{theorem}[\cite{cooley1965}]\label{thm:radix-2_FFT} If $\vec{a}\in\A^n$, then $\DFT_\omega(\vec{a})$ can be computed in-place with $(n/2)\log_2 n+\bigO(\log^2n)$ multiplications by powers of $\omega$, and $n\log_2n$ additions or subtractions.
\end{theorem}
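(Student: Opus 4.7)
The plan is to derive the algorithm directly from Lemma~\ref{lem:radix-2_FFT} together with the alternative butterfly identity stated immediately after it. I would proceed in $p=\log_2 n$ passes indexed by $k=p-1,p-2,\dotsc,0$, at pass $k$ applying the $n/2$ butterflies of~\eqref{eqn:butterfly} to overwrite the pairs of entries of the array currently holding $a_{k+1,0},\dotsc,a_{k+1,n-1}$, thereby producing $a_{k,0},\dotsc,a_{k,n-1}$. Each butterfly is a $2\times 2$ linear map depending on a single twiddle factor, and so can be evaluated in place with two additions or subtractions and one multiplication by a power of $\omega$, using only a constant number of auxiliary registers; summed over all passes this already accounts for the $n\log_2 n$ additions or subtractions claimed.

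To make the whole algorithm fit within $\bigO(1)$ auxiliary space, I would apply the bit-reversal permutation to the input at the beginning. Since it is an involution, it can be realised by at most $n/2$ swaps, each using a single temporary register. After this initial permutation the alternative butterfly identity applies, and the twiddle of the butterfly at pass $k$ indexed by $(i,j)$ depends only on $i$ and equals $\omega^{2^k i}$. Iterating $i$ from $0$ to $2^{p-k-1}-1$ within the pass therefore walks through consecutive powers of $\omega^{2^k}$, which I would generate on the fly by maintaining a single current twiddle and advancing it by one multiplication by $\omega^{2^k}$ per step.

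The main issue left is producing the base $\omega^{2^k}$ for each pass while keeping only $\bigO(1)$ ring elements in memory. I would resolve it by recomputing $\omega^{2^k}$ from $\omega$ via $p-1-k$ repeated squarings at the start of pass $k$, which contributes $\sum_{k=0}^{p-1}(p-1-k)=\bigO(\log^2 n)$ multiplications over all passes. For the final count of multiplications by powers of $\omega$, the $2^k$ butterflies with $i=0$ at pass $k$ have twiddle $1$ and are therefore free, saving $\sum_{k=0}^{p-1}2^k=n-1$ multiplications relative to a naive count of one per butterfly. The $\sum_{k=0}^{p-1}(2^{p-k-1}-1)=n-1-p$ on-the-fly updates of the running twiddle and the $\bigO(\log^2 n)$ squarings then combine with the $(n/2)\log_2 n-(n-1)$ remaining butterfly multiplications to yield the stated $(n/2)\log_2 n+\bigO(\log^2 n)$ bound.
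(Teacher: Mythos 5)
Your proof is correct and matches the approach the paper sketches just before Theorem~\ref{thm:radix-2_FFT}: apply the bit-reversal permutation at the start (an involution, hence at most $n/2$ swaps), then at pass $k$ use the rewritten butterfly identity so the twiddle $\omega^{2^ki}$ can be maintained as a running power of $\omega^{2^k}$, with $\omega^{2^k}$ itself obtained by repeated squaring for a total of $\bigO(\log^2 n)$ extra multiplications. Your accounting --- $n-1$ free butterflies with $i=0$, $n-1-p$ running-twiddle updates, and $\bigO(\log^2 n)$ squarings offsetting the naive $(n/2)\log_2 n$ --- is exactly the arithmetic behind the paper's stated $(n/2)\log_2 n+\bigO(\log^2 n)$ and $n\log_2 n$ bounds.
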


\subsection{The truncated Fourier transform}

The length $\ell\leq n$ truncated Fourier transform (TFT) with respect to $\omega$ is the $\A$-linear map $\TFT_{\omega,\ell}:\A^\ell\rightarrow\A^\ell$ defined by $(a_0,\dotsc,a_{\ell-1})\mapsto (\hat{a}_{[0]_p},\dotsc,\hat{a}_{[\ell-1]_p})$ where
\begin{equation*}
	\hat{a}_i
	=
	\sum^{\ell-1}_{j=0}
	a_j
	\omega^{ij}
	\quad\text{for $i\in\{0,\dotsc,n-1\}$}.
\end{equation*}
When convenient, we also view the truncated Fourier transform as the polynomial evaluation map $\TFT_{\omega,\ell}:\A[x]_{<\ell}\rightarrow\A^\ell$ given by $f\mapsto(f(\omega^{[0]_p}),\dotsc,f(\omega^{[\ell-1]_p}))$, where $\A[x]_{<\ell}$ denotes the set of polynomials over $\A$ with degree less than $\ell$.

The length $\ell$ TFT is readily evaluated by applying Lemma~\ref{lem:radix-2_FFT} with $a_\ell=\dotsb=a_{n-1}=0$. Van der Hoeven's TFT algorithm simply augments this approach by disregarding redundant parts of the computation. For example, the redundancies introduced by the zero padding are accounted for by the following lemma. 

\begin{lemma}\label{lem:level_m-1} Let $\ell\in\{2,\dotsc,n\}$, $m=\ceil{\log_2\ell}$ and $a_0,\dotsc,a_{n-1}\in\A$ such that $a_\ell=\dotsb=a_{n-1}=0$. Define $a_{k,0},\dotsc,a_{k,n-1}\in\A$ for $k\in\{0,\dotsc,p\}$ as per Lemma~\ref{lem:radix-2_FFT}. Then
\begin{equation}\label{eqn:level_m-1}
	a_{m-1,2^{m-1}i+j}
	=
	\begin{cases}
		a_j+(-1)^ia_{2^{m-1}+j} & \text{if $j<\ell-2^{m-1}$},    \\
		a_j                     & \text{if $j\geq\ell-2^{m-1}$},
	\end{cases}
\end{equation}
for $i\in\{0,1\}$ and $j\in\{0,\dotsc,2^{m-1}-1\}$.
\end{lemma}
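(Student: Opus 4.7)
The plan is to reduce the claim to a single butterfly at level $k = m-1$ by first showing that, thanks to the truncation hypothesis, all higher-level butterflies act trivially. Concretely, instantiating \eqref{eqn:butterfly} at $k = m-1$ with its inner loop index equal to $0$ and using $\omega^{[0]_p} = 1$ yields
\[
a_{m-1,\, 2^{m-1} i + j} = a_{m,\, j} + (-1)^i\, a_{m,\, 2^{m-1} + j}
\qquad (i \in \{0, 1\},\ j \in \{0, \dotsc, 2^{m-1} - 1\}),
\]
so proving the lemma reduces to identifying $a_{m,j}$ with $a_j$ and $a_{m,\,2^{m-1}+j}$ with $a_{2^{m-1}+j}$. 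The two branches of \eqref{eqn:level_m-1} then correspond to whether $2^{m-1} + j < \ell$ (the second input survives) or $2^{m-1} + j \geq \ell$ (the second input is killed by the truncation hypothesis).

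The bridge I would use is a downward induction on $k$ establishing the invariant
\[
a_{k,\, r} = a_{r \bmod 2^k}
\qquad \text{for all } k \in \{m, \dotsc, p\} \text{ and } r \in \{0, \dotsc, n-1\},
\]
where we extend the truncation by writing $a_s = 0$ for $\ell \leq s < n$. The base case $k = p$ is the definition $a_{p, i} = a_i$. For the inductive step from $k+1$ down to $k$ with $m \leq k < p$, equation \eqref{eqn:butterfly} expresses both $a_{k,\, 2^k(2i)+j}$ and $a_{k,\, 2^k(2i+1)+j}$ as $\pm$-combinations of $a_{k+1,\, 2^k(2i)+j}$ and $\omega^{[2i]_p}\, a_{k+1,\, 2^k(2i+1)+j}$, which by the inductive hypothesis equal $a_j$ and $\omega^{[2i]_p}\, a_{2^k + j}$ respectively. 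Since $k \geq m$ forces $2^k \geq 2^m \geq \ell$, we have $a_{2^k + j} = 0$ for every $j \in \{0, \dotsc, 2^k - 1\}$; both outputs therefore collapse to $a_j$, which matches the common reduced value $a_{(2^k(2i)+j)\, \bmod\, 2^k} = a_{(2^k(2i+1)+j)\, \bmod\, 2^k} = a_j$ and closes the induction.

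Specialising the invariant to $k = m$ (noting that both $j$ and $2^{m-1} + j$ lie in $\{0, \dotsc, 2^m - 1\}$) gives $a_{m,\, j} = a_j$ and $a_{m,\, 2^{m-1} + j} = a_{2^{m-1} + j}$, and substituting into the identity from the first paragraph produces $a_{m-1,\, 2^{m-1} i + j} = a_j + (-1)^i a_{2^{m-1} + j}$, which is exactly \eqref{eqn:level_m-1} once one observes that $a_{2^{m-1} + j} = 0$ precisely when $j \geq \ell - 2^{m-1}$. I do not anticipate a serious obstacle. The only mildly delicate point is the boundary case $m = p$, for which the induction collapses to its base case and the final butterfly is applied directly to the input vector $\vec{a}$, but the conclusion is unaffected.
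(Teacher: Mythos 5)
Your proof is correct and follows essentially the same route as the paper's: a downward induction from $k=p$ to $k=m$ showing the level-$k$ values reduce to the original coefficients (using $2^k\geq 2^m\geq\ell$ to kill the upper half), followed by one explicit butterfly at $k=m-1$ with $i=0$. The only cosmetic difference is that you maintain the invariant $a_{k,r}=a_{r\bmod 2^k}$ for \emph{all} $r$, whereas the paper tracks only $a_{k,j}=a_j$ for $j<2^k$, which is all that the final step requires; also, your closing ``precisely when'' should really be ``whenever'' (the converse need not hold, though this does not affect the argument since both branches of~\eqref{eqn:level_m-1} agree when $a_{2^{m-1}+j}=0$).
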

\begin{proof} If for some $k\in\{m,\dotsc,p-1\}$, we have $a_{k+1,j}=a_j$ for $j\in\{0,\dotsc,2^{k+1}-1\}$, then \eqref{eqn:butterfly} with $i=0$ implies that $a_{k,j}=a_{k+1,j}+a_{k+1,2^k+j}=a_j+a_{2^k+j}=a_j$ for $j\in\{0,\dotsc,2^k-1\}$, since $2^k\geq 2^m\geq\ell$. As $a_{p,j}=a_j$ for $j\in\{0,\dotsc,2^p-1\}$ by definition, it follows by induction that $a_{m,j}=a_j$ for $j\in\{0,\dotsc,2^m-1\}$. Thus,~\eqref{eqn:butterfly} with $k=m-1$ and $i=0$ implies that $a_{m-1,j}=a_j+a_{2^{m-1}+j}$ and $a_{m-1,2^{m-1}+j}=a_j-a_{2^{m-1}+j}$ for $j\in\{0,\dotsc,2^{m-1}-1\}$, where $a_{2^{m-1}+j}=0$ if $j\geq\ell-2^{m-1}$.
\end{proof}

To compute the truncated Fourier transform of $(a_0,\dotsc,a_{\ell-1})\in\A^\ell$, van der Hoeven's algorithm operates on an array of length $2^{\ceil{\log_2\ell}}$ that initially has its first $\ell$ entries set equal to $a_0,\dotsc,a_{\ell-1}$. The algorithm then overwrites the first $2^k\ceil{\ell/2^k}$ entries of the array with $a_{k,0},\dotsc,a_{k,2^k\ceil{\ell/2^k}-1}$, successively for $k=\ceil{\log_2\ell}-1,\ceil{\log_2\ell}-2,\dotsc,0$, by using \eqref{eqn:level_m-1} if $k=\ceil{\log_2\ell}-1$, and otherwise using~\eqref{eqn:butterfly} to overwrite pairs of entries or single entries. The truncated Fourier transform is then read from the first $\ell$ entries of the array upon termination.

The description of van der Hoeven's ITFT algorithm is more involved, and not provided here. However, we do recall a simple, but key, observation that is used in its development. Van der Hoeven~\cite[Section~4]{hoeven2004} (see also~\cite[Section~6]{hoeven2005}) observes that not only do we have~\eqref{eqn:butterfly}, but also the relations
\begin{equation}\label{eqn:inverse_butterfly}
	\begin{pmatrix}
		2a_{k+1,2^k(2i  )+j} \\
		2a_{k+1,2^k(2i+1)+j}
	\end{pmatrix}
	=
	\begin{pmatrix}
		1 &  1 \\
		\omega^{-[2i]_p} & -\omega^{-[2i]_p}
	\end{pmatrix}
	\begin{pmatrix}
		a_{k,2^k(2i  )+j} \\
		a_{k,2^k(2i+1)+j}
	\end{pmatrix}
\end{equation}
and
\begin{equation}\label{eqn:vdH}
	\begin{pmatrix}
		a_{k+1,2^k(2i)+j}\\
		a_{k,2^k(2i+1)+j}
	\end{pmatrix}
	=
	\begin{pmatrix}
		1 & -\omega^{[2i]_p}\\
		1 & -2\omega^{[2i]_p}
	\end{pmatrix}
	\begin{pmatrix}
		a_{k,2^k(2i)+j}\\
		a_{k+1,2^k(2i+1)+j}
	\end{pmatrix}
\end{equation}
for $k\in\{0,\dotsc,p-1\}$, $i\in\{0,\dotsc,2^{p-k-1}-1\}$ and $j\in\{0,\dotsc,2^k-1\}$, and those obtained by inverting the matrix in~\eqref{eqn:vdH}. The additional relations are used in van der Hoeven's ITFT algorithm to navigate nimbly, and cheaply, through the elements $a_{k,i}$ for increasing \emph{and} decreasing~$k$. They are used similarly in our in-place algorithms. 

The complexity of van der Hoeven's TFT and ITFT algorithms is summarised as follows.

\begin{theorem}[{\cite[Theorems~1 and~2]{hoeven2004}}]\label{thm:vdH} Suppose that $\vec{a}\in\A^\ell$ for some $\ell\leq n$. Then $\TFT_{\omega,\ell}(\vec{a})$ can be computed with $(\ell/2)\log_2\ell+\bigO(\ell)$ multiplications by powers of $\omega$, and $\ell\log_2\ell+\bigO(\ell)$ additions or subtractions. Moreover, if $2$ is a unit in $\A$, then $\vec{a}$ can be recovered from $\TFT_{\omega,\ell}(\vec{a})$ with $(\ell/2)\log_2\ell+\bigO(\ell)$ multiplications by powers of $\omega$, and $\ell\log_2\ell+\bigO(\ell)$ shifted additions or subtractions.
\end{theorem}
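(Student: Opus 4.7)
The plan is to realise the forward TFT by the level-by-level scheme outlined after Lemma~\ref{lem:level_m-1}, and then to invert it stage-by-stage using the auxiliary relations~\eqref{eqn:inverse_butterfly} and~\eqref{eqn:vdH}.

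For the forward bound, set $m=\ceil{\log_2\ell}$ and pad $\vec{a}$ with zeros so that Lemma~\ref{lem:radix-2_FFT} applies to an array of length $2^m$. I would compute $a_{k,0},\dotsc,a_{k,2^k\ceil{\ell/2^k}-1}$ in place, successively for $k=m-1,m-2,\dotsc,0$. The top level is handled multiplication-free by Lemma~\ref{lem:level_m-1}, contributing at most $2(\ell-2^{m-1})<2\ell$ additions or subtractions. At each lower level $k$, I apply the butterfly~\eqref{eqn:butterfly} to index pairs both lying in the required range, and collapse the handful of boundary positions (where only one output is needed, or where one input is known to vanish) to a degenerate step of cost at most one multiplication and one addition. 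Writing $s_k=\ceil{\ell/2^k}$, the number of full butterflies at level $k<m-1$ is at most $2^{k-1}s_k\leq\ell/2+O(1)$; summing over $k\in\{0,\dotsc,m-2\}$ and absorbing the boundary contributions, the top level, and the twiddle-factor cost (handled exactly as in Theorem~\ref{thm:radix-2_FFT}) into the $O(\ell)$ term yields the claimed $(\ell/2)\log_2\ell+O(\ell)$ multiplications and $\ell\log_2\ell+O(\ell)$ additions or subtractions.

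For the inverse, assume $2$ is a unit in $\A$. Starting from $\TFT_{\omega,\ell}(\vec{a})=(a_{0,[0]_p},\dotsc,a_{0,[\ell-1]_p})$, I process levels upward from $k=0$ to $k=m-1$. A pair of adjacent level-$k$ entries is converted to the corresponding pair of level-$(k+1)$ entries by the inverse butterfly~\eqref{eqn:inverse_butterfly}, at the cost of one multiplication by a power of $\omega$ and two shifted additions or subtractions (the factor $2^{-1}$ being absorbed into the shift). At each ragged boundary where only one input of a would-be inverse butterfly is available, relation~\eqref{eqn:vdH} and its inverse (whose matrix is invertible because $2$ is a unit) bridge between the two levels by exchanging a level-$k$ entry for the missing level-$(k+1)$ entry at comparable cost. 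The top level reverses Lemma~\ref{lem:level_m-1} in $O(\ell)$ operations. Since each level performs the same number of inverse butterflies as its forward counterpart did forward butterflies, the per-level counts match the forward direction up to a constant factor absorbed into the $O(\ell)$ term.

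The main obstacle is the book-keeping around the ragged boundaries: the bridging uses of~\eqref{eqn:vdH} must be scheduled so that each level's boundary is resolved with $O(1)$ additional operations (ensuring an $O(\log\ell)$ total contribution from bridging), and the navigation between the two levels at each such boundary must terminate with $\vec{a}$ correctly reconstructed without revisiting any butterfly. Induction on $k$ then gives correctness, and summation over the levels yields the claimed ITFT bound.
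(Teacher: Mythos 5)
The paper does not prove Theorem~\ref{thm:vdH}: the result is cited directly from van der Hoeven and recorded without proof, so there is nothing in the paper to compare against and your sketch must be judged on its own. The forward direction is essentially right and matches the algorithm the paper outlines (compute $a_{k,0},\dotsc,a_{k,2^k\ceil{\ell/2^k}-1}$ for $k=m-1,\dotsc,0$, the top level multiplication-free by Lemma~\ref{lem:level_m-1}); the intermediate estimate ``$2^{k-1}s_k\leq\ell/2+O(1)$'' should read $\ell/2+O(2^k)$, but since $\sum_{k\leq m-2}2^{k-1}=O(\ell)$ the stated bound still follows.

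The inverse direction, however, has a genuine gap and not merely unresolved book-keeping. You describe a single bottom-up sweep over $k=0,\dotsc,m-1$ with boundary positions ``bridged'' by~\eqref{eqn:vdH}. But~\eqref{eqn:vdH} supplies the missing level-$(k+1)$ partner only if one already holds a level-$(k+1)$ overhang entry at an index $\geq\ell$, and that entry is neither given nor zero: by Lemma~\ref{lem:level_m-1} one has $a_{m-1,2^{m-1}+j}=a_j$ for $j\geq\ell-2^{m-1}$, so the overhang at level $m-1$ consists of exactly the output coefficients one has not yet recovered, and it therefore cannot be populated by a top-down pass from the known zeros at level $p$ either. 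Resolving this forces an interleaving of ascent and descent through the levels, which is why van der Hoeven's ITFT is a recursive, bidirectional algorithm rather than a monotone level sweep, and why the paper's own Algorithm~\ref{alg:itft} alternates upward and downward loops. The closing appeal to ``induction on $k$'' does not fill this hole, since the schedule on which correctness hinges is never specified, and the claimed ``$O(\log\ell)$ bridging'' estimate cannot be verified without it.
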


In Theorem~\ref{thm:vdH}, a ``shifted'' addition or subtraction refers to the combined operation of an addition or subtraction with a multiplication by $2$ or $2^{-1}$. These operations arise, for example, when using~\eqref{eqn:inverse_butterfly} and~\eqref{eqn:vdH}, and are significant for the reasons discussed in Section~\ref{sec:computational_model}.

\subsection{In-place algorithms for the truncated Fourier transform}

Van der Hoeven's TFT and ITFT algorithms both operate on arrays of $2^{\ceil{\log_2\ell}}$ ring elements. Thus, their auxiliary space requirements are not $\bigO(1)$, since each algorithms stores at least $2^{\ceil{\log_2\ell}}-\ell$ ring elements auxiliary space. The first in-place algorithms for the truncated Fourier transform are due to Harvey and Roche~\cite{harvey2010} (see also \cite[Chapter~3]{roche2011}). Their TFT algorithm is again based on the idea of computing the sequences $a_{k,0},\dotsc,a_{k,2^k\ceil{\ell/2^k}-1}$. However, elements $a_{k,i}$ with $i>\ell$ are never stored by the algorithm. Instead, they are computed on-the-fly only when needed, and then immediately discarded. This approach allows the algorithm to operate on a vector of length $\ell$, and potentially results in the computation of fewer elements $a_{k,i}$ with $i>\ell$. However, the individual cost of their computation is higher, leading to an increase in multiplicative complexity over van der Hoeven's algorithm of a constant factor.

\begin{theorem}[{\cite[Chapter~3]{roche2011}}] If $\vec{a}\in\A^\ell$ for some $\ell\leq n$, then $\TFT_{\omega,\ell}(\vec{a})$ can be computed in-place with $5\ell(\log_2\ell)/6+\bigO(\ell)$ multiplications by powers of $\omega$, and $\bigO(\ell\log\ell)$ additions or subtractions.
\end{theorem}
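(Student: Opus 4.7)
The plan is to describe an algorithm that emulates van der Hoeven's TFT while storing only $\ell$ ring elements of working data, rather than $2^{\ceil{\log_2\ell}}$. The algorithm proceeds level by level for $k=\ceil{\log_2\ell}-1,\dotsc,0$, maintaining the invariant that at the end of level $k$ the array holds $a_{k,0},\dotsc,a_{k,\ell-1}$. Level $\ceil{\log_2\ell}-1$ is handled in place by the explicit formula of Lemma~\ref{lem:level_m-1}, and at each lower level $k$ those butterflies~\eqref{eqn:butterfly} whose paired indices both lie in $\{0,\dotsc,\ell-1\}$ are executed in place exactly as in van der Hoeven's algorithm.

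The new ingredient is the handling of the ``boundary'' butterflies at each level, in which one of the two input indices $2^k(2i)+j,\,2^k(2i+1)+j$ is at least $\ell$ while the other is not. Since the out-of-range entry $a_{k+1,i'}$ is not stored, it must be reconstructed on demand. I would show that it can be expressed, via a short cascade of butterflies, in terms of the live entries $a_{k',0},\dotsc,a_{k',\ell-1}$ for some higher level $k'>k+1$ whose values are encoded in the present state of the array; heavy pruning is possible because $a_\ell=\dotsb=a_{n-1}=0$ trims most branches of this cascade. A small recursive routine then reconstructs the missing entry using $\bigO(1)$ ring-valued temporaries, with the required powers of $\omega$ generated successively as in Theorem~\ref{thm:radix-2_FFT}.

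The complexity analysis then reduces to counting reconstructions. Across all levels, the number of missing entries that have to be produced is $\bigO(\ell)$, and each requires $\bigO(\log\ell)$ ring operations, so the overhead relative to van der Hoeven's algorithm is $\bigO(\ell\log\ell)$; combined with the $\ell\log_2\ell+\bigO(\ell)$ additions of the in-place butterflies this yields the stated additive bound. For the multiplicative count, one separates butterflies with trivial twiddle factors from those with non-trivial ones, and exploits the fact that reconstructions at a given level can share intermediate results when their missing ancestors coincide. The main obstacle I anticipate is precisely this last amortisation: obtaining the constant $5/6$ in front of $\ell\log_2\ell$ requires a careful schedule of the boundary butterflies within each level so that no reconstructed entry is recomputed more than necessary, together with a case analysis of how many of the recomputation butterflies carry a non-trivial power of $\omega$.
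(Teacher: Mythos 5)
The paper does not prove this theorem; it cites Roche's thesis and gives only an informal one-paragraph description of the Harvey--Roche algorithm (out-of-range elements $a_{k,i}$ with $i\geq\ell$ are ``computed on-the-fly only when needed, and then immediately discarded''). Your high-level picture matches that description, but as written your proposal contains an internal inconsistency that is in fact the central difficulty the Harvey--Roche construction has to overcome, and you have not resolved it.

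The problem is the combination of the invariant ``at the end of level $k$ the array holds $a_{k,0},\dotsc,a_{k,\ell-1}$'' with the claim that a missing $a_{k+1,i'}$, $i'\geq\ell$, ``can be expressed\dots in terms of the live entries $a_{k',0},\dotsc,a_{k',\ell-1}$ for some higher level $k'>k+1$ whose values are encoded in the present state of the array.'' If the invariant holds, then at the moment level $k$ is being processed every entry of the array is at level $k$ or $k+1$; the level-$k'$ values for $k'>k+1$ have been overwritten and are simply not present. The zero-padding relation of Lemma~\ref{lem:level_m-1} ties $a_{m-1,i}$ to $a_{m-1,i-2^{m-1}}$, not $a_{k+1,i}$ to any $a_{k+1,i'}$ with $i'<\ell$, so there is no purely forward, within-level relation to exploit; recovering higher-level values from a uniformly level-$(k+1)$ array would require inverting butterflies, which needs division by $2$, an assumption this theorem does not make (and which would turn plain additions into shifted ones). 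The actual algorithm has to schedule the butterflies non-uniformly so that the entries near the boundary stay at a high enough level to serve as anchors for the on-the-fly recomputations --- precisely the ingredient your invariant rules out. Relatedly, the count of the multiplications is not just ``an obstacle I anticipate'': the leading constant $5/6$ (rather than, say, $1$) is the content of the theorem, and your argument as stated gives no mechanism for bounding it. Naively each of the $\bigO(\ell)$ out-of-range entries costs $\bigO(\log\ell)$ multiplications to rebuild, which only yields a $\bigO(\ell\log\ell)$ bound with an unspecified constant; the $5/6$ comes from an amortised accounting over the binary expansion of $\ell$ together with a scheduling that keeps recomputation depths shallow, and that accounting is missing here. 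You should either exhibit the mixed-level invariant and scheduling explicitly (which then makes the recomputation cost analysable), or take the route of Sections~\ref{sec:tft}--\ref{sec:itft} / Arnold, which avoids per-entry recomputation entirely.
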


Harvey and Roche's in-place ITFT algorithm performs a similar number of multiplications by powers of~$\omega$, and $\bigO(\ell\log\ell)$ shifted additions or subtractions.

Arnold~\cite{arnold2013} provides in-place TFT and ITFT algorithms that match the complexity of van der Hoeven's algorithms up to lower order terms. The problem of evaluating $\TFT_{\omega,\ell}$ is not considered directly by Arnold. Instead, the main focus of Arnold's work is to improve an existing in-place algorithm for evaluating the ``cyclotomic'' truncated Fourier transform proposed by Mateer~\cite[Chapter~6]{mateer2008}. The improved algorithm is then modified to evaluate $\TFT_{\omega,\ell}$. We summarise Arnold's main results relating to the evaluation of $\TFT_{\omega,\ell}$ in the following two lemmas.

\begin{lemma}[{\cite[Section~6]{arnold2013}}]\label{lem:tft_decomposition} Write $\ell\in\{1,\dotsc,n\}$ as $\sum^t_{r=1}\ell_r$ for integer powers of two $\ell_1>\ell_2>\dotsb>\ell_t$, and let $e_r=[\ell_r+\dotsb+\ell_t]_p$ for $r\in\{1,\dotsc,t\}$, with $[n]_p$ taken to be zero. Then, for $f\in\A[x]_{<\ell}$,
\begin{equation*}
	\TFT_{\omega,\ell}(f)
	=
	\TFT_{\omega^{n/\ell_1},\ell_1}\left(f_1(x/\omega^{e_1})\right)
	\concat
	\dotsb
	\concat
	\TFT_{\omega^{n/\ell_t},\ell_t}\left(f_s(x/\omega^{e_t})\right),
\end{equation*}
where $f_r\in\A[x]_{<\ell_r}$ is the residue of $f(\omega^{[\ell]_p}x)$ modulo $x^{\ell_r}+1$, for $r\in\{1,\dotsc,t\}$, and $\concat$ denotes the concatenation operator.
\end{lemma}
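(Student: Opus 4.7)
The plan is to prove the decomposition by comparing each block on the right-hand side to the corresponding block of entries of $\TFT_{\omega,\ell}(f)$, using a careful analysis of the bit-reversal permutation on the index ranges dictated by the partial sums of the $\ell_r$.

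Let $s_r=\ell_1+\dotsb+\ell_{r-1}$ (so $s_1=0$) and $\ell_r=2^{p_r}$ with $p_1>p_2>\dotsb>p_t$. The first step is to establish the key index identity: for $j\in\{0,\dotsc,\ell_r-1\}$,
\begin{equation*}
[s_r+j]_p \;=\; c_r + 2^{\,p-p_r}\,[j]_{p_r},
\qquad\text{where}\qquad
c_r=\sum_{q=1}^{r-1}[\ell_q]_p.
\end{equation*}
This follows because the distinct-power decomposition of $s_r$ has no bits set in positions $0,\dotsc,p_r$, so adding $j<2^{p_r}$ places the bits of $j$ in the low $p_r$ positions of $s_r+j$ without interfering with the high bits; splitting the bit-reversal sum at position $p_r$ then separates the two contributions, and the high bits of $s_r$ reverse to $\sum_{q<r}2^{p-1-p_q}=\sum_{q<r}[\ell_q]_p$.

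The second step relates $c_r$ to $e_r$. Because $\ell_1,\dotsc,\ell_t$ are pairwise distinct powers of two, the sums $\ell=\sum\ell_q$ and $\ell_r+\dotsb+\ell_t$ are carry-free, so $[\ell]_p=\sum_q[\ell_q]_p$ and $e_r=\sum_{q\geq r}[\ell_q]_p$. Hence $c_r+e_r=[\ell]_p$, with the convention $[n]_p=0$ covering the case $\ell=n$. The crucial numerical consequence, which will be the main technical point of the proof, is that
\begin{equation*}
\ell_r e_r \;\equiv\; \tfrac{n}{2}\pmod n.
\end{equation*}
This is proved by direct computation: for $q=r$ the term $\ell_r[\ell_r]_p=2^{p_r}\cdot 2^{p-1-p_r}=n/2$, while for $q>r$ we have $p_q<p_r$, so $\ell_r[\ell_q]_p=2^{p-1+p_r-p_q}$ is a multiple of $n$.

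With these two ingredients in place, the proof is straightforward. Fix $r$, and consider the $j$-th entry of $\TFT_{\omega^{n/\ell_r},\ell_r}(f_r(x/\omega^{e_r}))$, which by definition of the (full, length-$\ell_r$) DFT equals $f_r(\omega^{\,2^{p-p_r}[j]_{p_r}-e_r})$. Setting $y=\omega^{\,2^{p-p_r}[j]_{p_r}-e_r}$, the identity $\ell_r e_r\equiv n/2\pmod n$ combined with $\omega^n=1$ and $\omega^{n/2}=-1$ yields $y^{\ell_r}=-1$, so $y$ is a root of $x^{\ell_r}+1$. By the defining residue relation for $f_r$, this gives $f_r(y)=f(\omega^{[\ell]_p}y)=f(\omega^{\,[\ell]_p-e_r+2^{p-p_r}[j]_{p_r}})=f(\omega^{\,c_r+2^{p-p_r}[j]_{p_r}})=f(\omega^{[s_r+j]_p})$, which is precisely the $(s_r+j)$-th entry of $\TFT_{\omega,\ell}(f)$. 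Concatenating over $r\in\{1,\dotsc,t\}$ and $j\in\{0,\dotsc,\ell_r-1\}$ exhausts all $\ell$ output positions, completing the proof.

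The main obstacle is the verification that $\ell_r e_r\equiv n/2\pmod n$, since this is what certifies that the shifted evaluation points in each block lie on the ``odd'' cosets where the reduction modulo $x^{\ell_r}+1$ is valid; everything else is bookkeeping on the bit-reversal permutation.
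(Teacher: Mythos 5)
The paper states this lemma as a quoted result from Arnold's work and gives no proof of its own, so there is no internal argument to compare against. Your proposal is, however, a correct and fairly standard derivation: the bit-splitting identity $[s_r+j]_p=c_r+2^{\,p-p_r}[j]_{p_r}$, the carry-free additivity giving $c_r+e_r=[\ell]_p$, and the identification of the $r$-th block's evaluation points as shifted roots of $x^{\ell_r}+1$ are all sound, and the substitution chain closes with each block landing exactly on the entries $s_r,\dotsc,s_r+\ell_r-1$ of $\TFT_{\omega,\ell}(f)$.

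One technical gap worth flagging: the congruence $\ell_r e_r\equiv n/2\pmod n$ fails in the degenerate case $\ell=n$. There $t=1$, $\ell_1=n$, and $e_1=[n]_p=0$ by the stated convention, so $\ell_1 e_1=0$, and the evaluation points satisfy $y^{\ell_1}=\omega^{n[j]_p}=1$ rather than $-1$. Your computation of the $q=r$ term tacitly writes $[\ell_r]_p=2^{\,p-1-p_r}$, which assumes $p_r<p$. In that one case the reduction modulo $x^{\ell_1}+1$ is a no-op (since $\deg f<\ell=\ell_1$), so $f_1=f$ and the claimed equality holds for the trivial reason, but the argument as written asserts something false about $y$. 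A one-sentence dispatch of the case $\ell=n$ would make the write-up airtight; everywhere else the proof is correct.
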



\begin{lemma}[{\cite[Lemma~13]{arnold2013}}]\label{lem:in-place_residues} Write $\ell\in\N\setminus\{0\}$ as $\sum^t_{r=1}\ell_t$ for integer powers of two $\ell_1>\ell_2>\dotsb>\ell_t$. Then given $f\in\A[x]_{<\ell}$, the minimum degree residues of $f$ modulo the polynomials $x^{\ell_r}+1$ for $r\in\{1,\dotsc,t\}$ can be computed in-place with at most $2\ell$ multiplications by $2$ or $2^{-1}$, and at most $3\ell$ additions or subtractions.
\end{lemma}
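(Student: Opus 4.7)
The plan is to prove Lemma~\ref{lem:in-place_residues} by induction on $t$. The base case $t = 1$ is immediate: $f$ has degree less than $\ell_1$, so it equals its own minimum-degree residue modulo $x^{\ell_1}+1$, and no operations are needed.

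For $t \geq 2$, I would begin by writing $f = g_1 + x^{\ell_1} g_2$ with $\deg g_1 < \ell_1$ and $\deg g_2 < \ell - \ell_1$; the coefficients of $g_1$ occupy the first block of the array and those of $g_2$ the remaining blocks. The key algebraic observation is that $\ell_1/\ell_r$ is a power of two of size at least $2$, hence even, for every $r \geq 2$. This yields $x^{\ell_1} \equiv 1 \pmod{x^{\ell_r}+1}$ for $r \geq 2$, while $x^{\ell_1} \equiv -1 \pmod{x^{\ell_1}+1}$; consequently $f \equiv g_1 - g_2 \pmod{x^{\ell_1}+1}$ and $f \equiv g_1 + g_2 \pmod{x^{\ell_r}+1}$ for $r \geq 2$.

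First I would perform in-place butterflies replacing each pair of positions $(i, \ell_1 + i)$ by $(f_i - f_{\ell_1+i}, f_i + f_{\ell_1+i})$, for $i = 0, \ldots, \ell - \ell_1 - 1$. This uses $2(\ell - \ell_1)$ additions or subtractions. Afterwards positions $0, \ldots, \ell_1 - 1$ hold $f \bmod (x^{\ell_1}+1)$ (using that $g_{2,i} = 0$ for $i \geq \ell - \ell_1$, so the positions not touched by a butterfly already contain the correct output coefficient), while positions $\ell_1, \ldots, \ell - 1$ hold the low-order coefficients of $u := g_1 + g_2$. The remaining coefficients of $u$ coincide with those of $g_1$ and stay at positions $\ell - \ell_1, \ldots, \ell_1 - 1$ in the first block, which may henceforth be read but not modified.

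The second phase computes the residues $u \bmod (x^{\ell_r}+1)$ for $r \geq 2$ into positions $\ell_1, \ldots, \ell - 1$, whose total length is precisely $\ell - \ell_1 = \ell_2 + \dotsb + \ell_t$. This is essentially the same problem with the shorter decomposition $(\ell_2, \ldots, \ell_t)$, except that the input polynomial $u$ has degree less than $\ell_1$, exceeding the $\ell - \ell_1$ writable positions. I would handle this by strengthening the inductive hypothesis so that it allows inputs whose coefficients are split between a writable workspace of length $\ell - \ell_1$ and a read-only extension stored in the preceding (already-completed) output block, from which the recursion may read when required. The main obstacle is verifying that this strengthened induction can be carried out in-place with only $O(1)$ auxiliary space and a linear overhead: careful ordering of the fold operations combining read-only and writable values, together with occasional use of multiplications by $2$ or $2^{-1}$ to stage intermediate values that must be read before being overwritten, keeps the auxiliary storage bounded. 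A telescoping count of the costs across the stages of the recursion then yields the claimed bounds of at most $2\ell$ multiplications by $2$ or $2^{-1}$ and at most $3\ell$ additions or subtractions.
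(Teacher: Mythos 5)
A point of context first: the paper states this lemma but does not prove it --- it is cited directly from Arnold's Lemma~13~\cite{arnold2013}, so there is no in-paper argument to compare against, and your proposal has to stand on its own.

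Your opening moves are sound: the identities $x^{\ell_1}\equiv -1\pmod{x^{\ell_1}+1}$ and $x^{\ell_1}\equiv 1\pmod{x^{\ell_r}+1}$ for $r\geq 2$, the in-place fold producing $g_1-g_2$ and (the low part of) $g_1+g_2$ via butterflies, and the observation that the frozen high coefficients of the first output block coincide with the high coefficients of the carry $u=g_1+g_2$. But the entire in-place content of the lemma is concentrated in the recursion, and there you substitute a one-sentence assertion for an argument. The obstruction is not merely that $u$ has degree less than $\ell_1$ while only $\ell-\ell_1<\ell_1$ slots are writable; it is that the situation degrades as the recursion deepens. After the next fold (at $x^{\ell_1/2}$, say), a carry coefficient $u_i+u_{\ell_1/2+i}$ that is needed at subsequent levels may never be stored at all: if $u_{\ell_1/2+i}$ sits in a read-only slot then position $i$ must be frozen at the output value $u_i-u_{\ell_1/2+i}$, and $u_i+u_{\ell_1/2+i}$ is thereafter recoverable only as $(u_i-u_{\ell_1/2+i})+2u_{\ell_1/2+i}$, at the price of an extra addition and a multiplication by $2$ each time it is used. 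This compounds level by level, which is exactly why the budget allows up to $2\ell$ such multiplications --- not the ``occasional'' few your sketch suggests. Formulating the invariant under which every needed quantity remains reconstructible with $\bigO(1)$ scratch, choosing an operation order that reads values before overwriting them whenever possible so as to avoid gratuitous reconstructions, and carrying out the cost count so that it actually lands at $2\ell$ multiplications by $2$ or $2^{-1}$ and $3\ell$ additions, is the substance of the lemma. Your proposal names all of this as ``the main obstacle'' and then asserts, without proof, that it can be overcome; that leaves the argument essentially where it started.
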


The maps $\TFT_{\omega^{n/\ell_r},\ell_r}$ that appear in Lemma~\ref{lem:tft_decomposition} can be evaluated in-place by the methods of Section~\ref{sec:FFT}, since $\omega^{n/\ell_r}$ is a principal $\ell_r$th root of unity. In Lemma~\ref{lem:in-place_residues}, ``computed in-place'' means that the length $\ell$ coefficient vector of~$f$ is overwritten with the concatenation of the coefficient vectors of the residues, appearing in order, and with the residue modulo $x^{\ell_r}+1$ having coefficient vector of length~$\ell_r$. Arnold furthermore shows that the bounds of the lemma also apply to the cost of inverting this transformation in-place. Thus, by combining Lemmas~\ref{lem:tft_decomposition} and~\ref{lem:in-place_residues} with Theorem~\ref{thm:radix-2_FFT}, Arnold obtains the following complexity result.

\begin{theorem}[\cite{arnold2013}]\label{thm:arnold} Suppose that $2$ is a unit in $\A$, and let $\vec{a}\in\A^\ell$ for some $\ell\leq n$. Then given either~$\vec{a}$ or $\TFT_{\omega,\ell}(\vec{a})$, the other vector can be computed in-place with $(\ell/2)\log_2\ell+\bigO(\ell)$ multiplications, and $\ell\log_2\ell+\bigO(\ell)$ additions or subtractions.
\end{theorem}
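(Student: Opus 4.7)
The plan is to combine Lemmas~\ref{lem:tft_decomposition} and~\ref{lem:in-place_residues} with the in-place radix-$2$ FFT of Theorem~\ref{thm:radix-2_FFT}. Write $\ell=\ell_1+\dotsb+\ell_t$ with $\ell_1>\dotsb>\ell_t$ powers of two, so that $t\leq 1+\log_2\ell$, and invoke Lemma~\ref{lem:tft_decomposition} to view $\TFT_{\omega,\ell}(f)$ as the concatenation of the $t$ smaller transforms $\TFT_{\omega^{n/\ell_r},\ell_r}\bigl(f_r(x/\omega^{e_r})\bigr)$, where $f_r$ is the residue of $f(\omega^{[\ell]_p}x)$ modulo $x^{\ell_r}+1$. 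Since each $\ell_r$ is a power of two and $\omega^{n/\ell_r}$ is a principal $\ell_r$th root of unity, every inner transform is an ordinary DFT to which Theorem~\ref{thm:radix-2_FFT} applies directly.

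To evaluate $\TFT_{\omega,\ell}(\vec{a})$ in-place, I would proceed in four stages. Stage~1: scale the $j$th entry of $\vec{a}$ by $\omega^{j[\ell]_p}$, generating the successive powers of $\omega^{[\ell]_p}$ on the fly with $\bigO(1)$ auxiliary storage; this realises $f\mapsto f(\omega^{[\ell]_p}x)$. Stage~2: apply Lemma~\ref{lem:in-place_residues} to overwrite the array with the concatenated coefficient vectors of the residues $f_1,\dotsc,f_t$, in order. Stage~3: within each block of length $\ell_r$, scale the $j$th entry by $\omega^{-je_r}$ to perform the substitution $x\mapsto x/\omega^{e_r}$, again generating the twiddles successively. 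Stage~4: apply the in-place radix-$2$ FFT of Theorem~\ref{thm:radix-2_FFT} independently to each of the $t$ contiguous blocks.

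The complexity accounting is then routine. Stage~4 contributes $\sum^t_{r=1}(\ell_r/2)\log_2\ell_r\leq(\ell/2)\log_2\ell$ multiplications and $\sum^t_{r=1}\ell_r\log_2\ell_r\leq\ell\log_2\ell$ additions or subtractions, together with a twiddle-generation overhead of $\sum^t_{r=1}\bigO(\log^2\ell_r)=\bigO(t\log^2\ell)=\bigO(\log^3\ell)=\bigO(\ell)$. Stages~1 and~3 each contribute fewer than $\ell$ further multiplications by powers of $\omega$, and stage~2 contributes at most $2\ell$ multiplications by $2$ or $2^{-1}$ (well-defined because $2$ is a unit) and $3\ell$ additions or subtractions. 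Summing yields the claimed bounds $(\ell/2)\log_2\ell+\bigO(\ell)$ and $\ell\log_2\ell+\bigO(\ell)$.

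To recover $\vec{a}$ from $\TFT_{\omega,\ell}(\vec{a})$ I would simply undo the four stages in reverse. Stage~4 is inverted by the inverse FFT, realised in-place at the same asymptotic cost by running Theorem~\ref{thm:radix-2_FFT} with $\omega^{-n/\ell_r}$ and then dividing by $\ell_r$ (possible because $2$, and hence $\ell_r$, is a unit); stage~3 by scalings with $\omega^{je_r}$; stage~2 by the in-place inverse of Arnold's residue map, which the remark following Lemma~\ref{lem:in-place_residues} asserts satisfies the same bounds; and stage~1 by scalings with $\omega^{-j[\ell]_p}$. The one point that needs care is verifying that the block-wise FFT costs genuinely collapse to $(\ell/2)\log_2\ell+\bigO(\ell)$ without accumulating an extra factor of $\log\ell$; this is handled by the crude bounds $\log_2\ell_r\leq\log_2\ell$ together with $t\leq 1+\log_2\ell$, so the argument is essentially bookkeeping rather than a mathematical obstacle.
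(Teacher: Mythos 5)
Your proposal is correct and follows essentially the same route the paper indicates, namely combining Lemma~\ref{lem:tft_decomposition} and Lemma~\ref{lem:in-place_residues} with the in-place radix-2 FFT of Theorem~\ref{thm:radix-2_FFT} (and the remark that the residue map of Lemma~\ref{lem:in-place_residues} is invertible at the same cost). One small bookkeeping slip: stages~1 and~3 each cost roughly $2\ell$ multiplications once the on-the-fly generation of the successive powers of $\omega^{[\ell]_p}$ (resp.\ $\omega^{-e_r}$) is counted, not ``fewer than $\ell$''; this does not affect the $\bigO(\ell)$ term and hence the stated bounds still follow.
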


The assumption in Theorem~\ref{thm:arnold} that $2$ is a unit in $\A$ can be removed for the case of evaluating $\TFT_{\omega,\ell}$ by replacing the algorithm of Lemma~\ref{lem:in-place_residues} with the earlier algorithm of Sergeev~\cite{sergeev2011}. The algorithms presented in this paper yield an alternative proof of the theorem, and without the need for the unit assumption when evaluating the truncated Fourier transform.

\section{An in-place truncated Fourier transform}\label{sec:tft}

Our in-place algorithm for computing the truncated Fourier transform is presented in Algorithm~\ref{alg:tft}. The algorithm operates on an array $(\x{0},\dotsc,\x{\ell-1})$, which initially contains a given vector $\vec{a}\in\A^\ell$, and has it entries overwritten by the algorithm with those of $\TFT_{\omega,\ell}(\vec{a})$. The algorithm also asks for $\omega^{n/2^{\ceil{\log_2\ell}}}$ as an input, thus assuming (along with Harvey and Roche~\cite[Section~3.3]{roche2011}) a precomputation of $p-\ceil{\log_2\ell}$ multiplications if repeated squaring is used. This element generates the smallest subgroup that contains the twiddle factors used in the algorithm, and is used as part of their computation. Algorithm~\ref{alg:tft} omits the details of the twiddle factor computations, which are instead provided in Section~\ref{sec:twiddle_factors}. We temporarily assume that $2$ is not a zero-divisor in $\A$, so that in Line~\ref{tft:push_up_right_j} of the algorithm the map $\Div_2:2\A\rightarrow\A$ given by $2a\mapsto a$ for $a\in\A$ is well-defined. We also assume that the map can be evaluated in-place. In~Section~\ref{sec:avoiding_inv_2}, we show how to modify the algorithm so as to remove the use of the map, thus removing also the need for the two assumptions.

\begin{algorithm}
	\caption{
		$\textproc{TFT}(
			\ell,
			\omega^{n/2^{\ceil{\log_2\ell}}},
			(\x{0},\dotsc,\x{\ell-1}))
		$}
	\label{alg:tft}
	\begin{algorithmic}[1]
		\Require the transform length $\ell\in\{1,\dotsc,n\}$, the precomputed element $\omega^{n/2^{\ceil{\log_2\ell}}}$, and an array $(\x{0},\dotsc,\x{\ell-1})$ containing some vector~$\vec{a}\in\A^\ell$.
		\Ensure the vector $\TFT_{\omega,\ell}(\vec{a})$ written to $(\x{0},\dotsc,\x{\ell-1})$.
		\State\label{tft:m}$m\set\ceil{\log_2\ell}$, $v\set\ord_2\ell$
		\State%
			\texttt{%
				// Compute $a_{m-1,0},\dotsc,a_{m-1,\ell-1}$
			}
		\For{$j=0,\dotsc,\ell-2^{m-1}-1$}\label{tft:level_m-1}
			\State\label{tft:level_m-1_butterfly}
				$\begin{pmatrix}
					\x{j}         \\
					\x{2^{m-1}+j}
				\end{pmatrix}
				\set
				\begin{pmatrix}
					1 &  1 \\
					1 & -1
				\end{pmatrix}
				\begin{pmatrix}
					\x{j}         \\
					\x{2^{m-1}+j}
				\end{pmatrix}$	
		\EndFor\label{tft:level_m-1_end}
		\State
			\texttt{%
				// Compute $a_{k,2^{k+1}\floor{\ell/2^{k+1}}},\dotsc,a_{k,2^k\ceil{\ell/2^k}-1}$ for $k=m-2,m-3,\dotsc,v$
			}
		\For{$k=m-2,m-3,\dotsc,v$}\label{tft:push_down_right}
			\State
				\label{tft:push_down_right_k}%
				\label{tft:q_and_r}%
				$q\set\floor{\ell/2^{k+1}}$,
				$r\set\ell-2^{k+1}q$,
				$q'\set q-2^{m-k-2}$,
				$\alpha\set\omega^{[2q]_p}$
			\If{$r>2^k$}
				\For{$j=0,\dotsc,r-2^k-1$}\label{tft:push_down_both}
					\State\label{tft:rightmost_butterfly_i}
						$\begin{pmatrix}
							\x{2^k(2q)  +j} \\
							\x{2^k(2q+1)+j}
						\end{pmatrix}
						\set
						\begin{pmatrix}
							1 &  \alpha \\
							1 & -\alpha
						\end{pmatrix}
						\begin{pmatrix}
							\x{2^k(2q)  +j} \\
							\x{2^k(2q+1)+j}
						\end{pmatrix}$
				\EndFor
				\For{$j=r-2^k,\dotsc,2^k-1$}\label{tft:push_down_both_ii}
					\State\label{tft:rightmost_butterfly_ii}
						$\begin{pmatrix}
							\x{2^k(2q)  +j} \\
							\x{2^k(2q'+1)+j}
						\end{pmatrix}
						\set
						\begin{pmatrix}
							1 &  \alpha \\
							1 & -\alpha
						\end{pmatrix}
						\begin{pmatrix}
							\x{2^k(2q)  +j} \\
							\x{2^k(2q'+1)+j}
						\end{pmatrix}$
				\EndFor\label{tft:push_down_both_ii_end}\label{tft:push_down_both_end}
			\Else
				\For{$j=0,\dotsc,r-1$}%
					\label{tft:push_down_left_only}
					\State
						$\x{2^k(2q)+j}\set\x{2^k(2q)+j}+\alpha\x{2^k(2q'+1)+j}$
				\EndFor
				\For{$j=r,\dotsc,2^k-1$}
				\State
					$\x{2^k(2q')+j}\set\x{2^k(2q')+j}+\alpha\x{2^k(2q'+1)+j}$
				\EndFor\label{tft:push_down_left_only_end}
			\EndIf\label{tft:push_down_right_k_end}
		\EndFor\label{tft:push_down_right_end}
		\State
			\texttt{%
				// Restore $\x{\ell-2^{m-1}},\dotsc,\x{2^{m-1}-1}$ to $a_{m-1,\ell-2^{m-1}},\dotsc,a_{m-1,2^{m-1}-1}$
			}
		\For{$k=v+1,v+2,\dotsc,m-2$}\label{tft:restore}
			\State
				\label{tft:restore_k}%
				$q\set\floor{\ell/2^{k+1}}$,
				$r\set\ell-2^{k+1}q$,
				$q'\set q-2^{m-k-2}$
			\If{$r>2^k$}
				\State\label{tft:alpha_push_up_right} $\alpha\set\omega^{-[2q]_p}$
				\For{$j=r-2^k,\dotsc,2^k-1$}\label{tft:push_up_right}
					\State\label{tft:push_up_right_j}
						$\x{2^k(2q'+1)+j}
						\set
						\Div_2(\alpha(\x{2^k(2q)+j}-\x{2^k(2q'+1)+j}))$
				\EndFor\label{tft:push_up_right_end}
			\Else
				\State\label{tft:alpha_push_up_left} $\alpha\set\omega^{[2q]_p}$
				\For{$j=r,\dotsc,2^k-1$}
					\label{tft:push_up_left}
					\State
						$\x{2^k(2q')+j}
						\set
						\x{2^k(2q')+j}-\alpha\x{2^k(2q'+1)+j}$
				\EndFor\label{tft:push_up_left_end}
			\EndIf\label{tft:restore_k_end}
		\EndFor\label{tft:restore_end}
		\State
			\texttt{%
				// Compute $a_{k,0},\dotsc,a_{k,2^{k+1}\floor{\ell/2^{k+1}}-1}$ for $k=m-2,m-3,\dotsc,0$
			}
		\For{$k=m-2,m-3,\dotsc,0$}\label{tft:remainder}
			\State
				\label{tft:remainder_k}%
				\label{tft:q}%
				$q\set\floor{\ell/2^{k+1}}$
			\For{$j=0,\dotsc,2^k-1$}
			\State\label{tft:butterfly_0}
				$\begin{pmatrix}
					\x{j} \\
					\x{2^k+j}
				\end{pmatrix}
				\set
				\begin{pmatrix}
					1 &  1 \\
					1 & -1
				\end{pmatrix}
				\begin{pmatrix}
					\x{j} \\
					\x{2^k+j}
				\end{pmatrix}$
			\EndFor
			\For{$(i,\alpha)\in\{(i,\omega^{[2i]_p})\mid i\in\{1,\dotsc,q-1\}\}$}\label{tft:twiddle_set}
				\For{$j=0,\dotsc,2^k-1$}
					\State\label{tft:butterfly}
						$\begin{pmatrix}
							\x{2^k(2i  )+j} \\
							\x{2^k(2i+1)+j}
						\end{pmatrix}
						\set
						\begin{pmatrix}
							1 &  \alpha \\
							1 & -\alpha
						\end{pmatrix}
						\begin{pmatrix}
							\x{2^k(2i  )+j} \\
							\x{2^k(2i+1)+j}
						\end{pmatrix}$
				\EndFor
			\EndFor\label{tft:remainder_k_end}
		\EndFor\label{tft:remainder_end}
	\end{algorithmic}
\end{algorithm}

\subsection{Overview and correctness}

Similar to van der Hoeven's TFT algorithm, Algorithm~\ref{alg:tft} is based on the idea of computing the sequences $a_{k,0},\dotsc,a_{k,2^k\ceil{\ell/2^k}-1}$ for $k=m-1,m-2,\dotsc,0$, where $m=\ceil{\log_2\ell}$. However, the computation does not proceed by computing the full sequences for successive values of $k$, since their length is greater than $\ell$ for $k$ such that $2^k$ does not divide $\ell$. Instead, in the first phase of the algorithm, consisting of Lines~\ref{tft:m} to~\ref{tft:push_down_right_end}, the subsequences $a_{k,2^{k+1}\floor{\ell/2^{k+1}}},\dotsc,a_{k,2^k\ceil{\ell/2^k}-1}$ are computed for $k=m-1,m-2,\dotsc,v$, where $v=\ord_2\ell$ is the maximum integer for which $2^v$ divides~$\ell$. Each computed element $a_{k,i}$ is written to $\x{i}$ if $i<\ell$, and written to $\x{i-2^{m-1}}$ if $i\geq\ell$. These assignments are consistent for $k=m-1$, since Lemma~\ref{lem:level_m-1} implies that $a_{m-1,i}=a_{m-1,2^{m-1}+i}$ for ${i\in\{\ell-2^{m-1},\dotsc,2^{m-1}-1\}}$ (of course, we do not compute these elements twice), while no clashes occur for smaller values of $k$, since only elements $a_{k,i}$ with $i\geq 2^{m-1}$ are computed. This method of assigning computed elements $a_{k,i}$ to the array is used throughout the entire algorithm.

By indexing in this manner, the elements $a_{m-1,\ell-2^{m-1}},\dotsc,a_{m-1,2^{m-1}-1}$, which by Lemma~\ref{lem:level_m-1} reside in the entries $\x{\ell-2^{m-1}},\dotsc,\x{2^{m-1}-1}$ after Lines~\ref{tft:level_m-1} and~\ref{tft:level_m-1_end} have been performed, are partially overwritten by Lines~\ref{tft:push_down_right} to~\ref{tft:push_down_right_end} with elements $a_{k,i}$ such that $k\leq m-2$ and $i\geq\ell$. However, the choice of the subsequences computed in the first phase of the algorithm ensures that no element $a_{k,i}$ with $i\geq\ell$ is needed to compute the remainder of the transform, whereas no further progress can be made on computing the transform without access to $a_{m-1,\ell-2^{m-1}},\dotsc,a_{m-1,2^{m-1}-1}$. Consequently, the next phase of the algorithm, Lines~\ref{tft:restore} to~\ref{tft:restore_end}, reverts the modifications to the entries $\x{\ell-2^{m-1}},\dotsc,\x{2^{m-1}-1}$ made by Lines~\ref{tft:push_down_right} to~\ref{tft:push_down_right_end} by proceeding backwards through the loop and using~\eqref{eqn:inverse_butterfly} and~\eqref{eqn:vdH} to partially inverting the offending transformations. The final phase of the algorithm, Lines~\ref{tft:remainder} to~\ref{tft:remainder_end}, then completes the computation of the transform by computing $a_{k,0},\dotsc,a_{k,2^{k+1}\floor{\ell/2^{k+1}}-1}$ for $k=m-2,m-3,\dotsc,0$, where $2^{k+1}\floor{\ell/2^{k+1}}=2^k\ceil{\ell/2^k}$ if $k<v$.

Further details are provided by the proof of the following lemma.

\begin{lemma}\label{lem:correctness} Algorithm~\ref{alg:tft} correctly evaluates $\TFT_{\omega,\ell}$.
\end{lemma}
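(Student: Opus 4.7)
The plan is to prove correctness by maintaining a precise invariant about the contents of $(\x{0},\dotsc,\x{\ell-1})$ through each of the four phases outlined in the overview, under the storage convention that any $a_{k,i}$ with $i<\ell$ resides in $\x{i}$ while any $a_{k,i}$ with $\ell\leq i<2^m$ resides in $\x{i-2^{m-1}}$. The key index identity supporting this convention is
\begin{equation*}
2^k(2q+1)+j-2^{m-1}=2^k(2q'+1)+j,
\end{equation*}
where $q=\floor{\ell/2^{k+1}}$ and $q'=q-2^{m-k-2}$, which explains why the loops access entries via the shifted index $q'$ precisely when they act on elements $a_{k,i}$ with $i\geq\ell$. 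The overview also notes that the clash $a_{m-1,j}=a_{m-1,2^{m-1}+j}$ for $j\geq\ell-2^{m-1}$ is consistent with the convention, so there is no genuine collision of storage at any point of the algorithm.

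First, I would verify that after Lines~\ref{tft:level_m-1}--\ref{tft:level_m-1_end}, the array stores $a_{m-1,0},\dotsc,a_{m-1,2^m-1}$ in the positions dictated by the convention; this is an immediate translation of Lemma~\ref{lem:level_m-1}, since the explicit butterfly is applied exactly when $j<\ell-2^{m-1}$, and is avoided otherwise. Next, for the loop on Lines~\ref{tft:push_down_right}--\ref{tft:push_down_right_end}, I would argue by downward induction on $k$ that, on entry to the iteration for $k$, the array stores $a_{k+1,i}$ for $i\in\{0,\dotsc,2^{k+1}q-1\}$ together with $a_{j,i}$ for every pair $(j,i)$ already produced by earlier iterations, i.e.\ with $k<j\leq m-1$ and $i$ in the ``rightmost'' range $\{2^{j+1}\floor{\ell/2^{j+1}},\dotsc,2^j\ceil{\ell/2^j}-1\}$. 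The inductive step is just an application of the butterfly~\eqref{eqn:butterfly} with $i=q$, split into the sub-cases $r>2^k$ and $r\leq 2^k$, with the index identity above confirming that the $\x{2^k(2q'+1)+j}$ entries touched in Lines~\ref{tft:rightmost_butterfly_ii} and~\ref{tft:push_down_left_only}--\ref{tft:push_down_left_only_end} really do hold the required $a_{k+1,2^k(2q+1)+j}$ values.

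For the restoration loop on Lines~\ref{tft:restore}--\ref{tft:restore_end}, I would run through the iterations in the forward direction $k=v+1,\dotsc,m-2$ and show that each step undoes exactly the damage done to the entries $\x{\ell-2^{m-1}},\dotsc,\x{2^{m-1}-1}$ by the corresponding iteration of the first loop. The $r>2^k$ branch uses the second row of~\eqref{eqn:inverse_butterfly} to recover $a_{m-1,2^k(2q'+1)+j}$ from the freshly computed $a_{k,2^k(2q)+j}$ and the modified entry $\x{2^k(2q'+1)+j}$, which is why $\Div_2$ is needed; the $r\leq 2^k$ branch uses the second row of~\eqref{eqn:vdH}, which is just a subtraction. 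After this loop, the invariant is that the array stores $a_{m-1,0},\dotsc,a_{m-1,2^{m-1}-1}$ in its first $2^{m-1}$ positions, together with the rightmost elements $a_{k,i}$ for $v\leq k\leq m-2$ in their convention-dictated positions. The final loop on Lines~\ref{tft:remainder}--\ref{tft:remainder_end} then runs the standard bottom-up radix-2 butterflies~\eqref{eqn:butterfly} restricted to indices $i\in\{0,\dotsc,q-1\}$, and by a straightforward downward induction on $k$ it produces $a_{k,0},\dotsc,a_{k,2^{k+1}\floor{\ell/2^{k+1}}-1}$ at each level while leaving the already-computed rightmost entries untouched; since these butterflies involve only positions with index less than $2^{k+1}q\leq\ell$, the storage convention plays no role here. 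Combining the three invariants, the array terminates holding $a_{0,0},\dotsc,a_{0,\ell-1}$, which by Lemma~\ref{lem:radix-2_FFT} is exactly $\TFT_{\omega,\ell}(\vec{a})$.

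The main obstacle is the bookkeeping for the index translation $i\mapsto i-2^{m-1}$: one must check at every butterfly, in both the forward and restoration loops, that the two sides of the assignment address the correct $a_{k,\cdot}$ and $a_{k+1,\cdot}$ values under the convention, and that the restoration indeed recovers the originals of the entries in $\x{\ell-2^{m-1}},\dotsc,\x{2^{m-1}-1}$ rather than merely undoing some generic transformation. Once the index identity displayed above is isolated, however, the rest of the proof is a mechanical matter of matching~\eqref{eqn:butterfly}, \eqref{eqn:inverse_butterfly} and~\eqref{eqn:vdH} to the relevant lines of the algorithm.
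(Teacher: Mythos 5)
Your high-level plan coincides with the paper's proof: the same storage convention (values $a_{k,i}$ with $i\geq\ell$ living in $\x{i-2^{m-1}}$), the same index identity $2^k(2q+1)+j-2^{m-1}=2^k(2q'+1)+j$ explaining the role of $q'$, the same four-phase structure, and the same use of~\eqref{eqn:butterfly}, \eqref{eqn:inverse_butterfly} and~\eqref{eqn:vdH} in the respective phases. The sticking point is the invariant you state for the first loop (Lines~\ref{tft:push_down_right}--\ref{tft:push_down_right_end}). You claim that on entry to iteration $k$ the array stores $a_{k+1,i}$ for $i\in\{0,\dotsc,2^{k+1}q-1\}$. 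This is false: the loop never touches entries $\x{i}$ with $i<\ell-2^{m-1}$, so those entries still hold the level-$(m-1)$ values $a_{m-1,i}$ set in Lines~\ref{tft:level_m-1}--\ref{tft:level_m-1_end}, and for $k+1<m-1$ these differ from $a_{k+1,i}$. Concretely, take $\ell=5$ (so $m=3$, $v=0$): after the $k=1$ iteration we have $\x{0}=a_{2,0}$, whereas your invariant for the $k=0$ iteration (where $q=2$, so $2^{k+1}q=4$) would require $\x{0}=a_{1,0}=a_{2,0}+a_{2,2}$. The correct invariant, as in the paper, concerns only indices in the right-hand range $i\geq 2^{k+2}\floor{\ell/2^{k+2}}$ (the paper's set $R_{k+1}$) together with the mirrored set $R_{k+1}'$; the iteration-$k$ butterflies read only from these ranges, so nothing should be asserted about the untouched left part.

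A related weakness appears in your treatment of the final loop. You propose a ``straightforward downward induction'' yielding $a_{k,0},\dotsc,a_{k,2^{k+1}\floor{\ell/2^{k+1}}-1}$ at each level, but iteration $k$ consumes $a_{k+1,i}$ for $i$ up to $2^{k+1}\floor{\ell/2^{k+1}}-1$, which can strictly exceed $2^{k+2}\floor{\ell/2^{k+2}}-1$; the entries in between were written during the first two phases, not by the previous iteration of the final loop. The induction hypothesis therefore has to be supplemented by the precise post-restoration state (the paper's equation~\eqref{eqn:after_swap}), which records $\x{i}=a_{k+1,i}$ on $R_{k+1}\setminus R_k$. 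Once both invariants are restated on the correct index ranges, the rest of your sketch --- in particular the observation that the restoration loop undoes, step by step, the overwriting of $\x{\ell-2^{m-1}},\dotsc,\x{2^{m-1}-1}$, and the special handling of the $\Div_2$ case versus the~\eqref{eqn:vdH} case --- lines up with the paper's argument.
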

\begin{proof} Suppose that Algorithm~\ref{alg:tft} is called with $\vec{a}=(a_0,\dotsc,a_{\ell-1})\in\A^\ell$. Define $a_{k,0},\dotsc,a_{k,n-1}$ for $k\in\{0,\dotsc,p\}$ as per Lemma~\ref{lem:radix-2_FFT}, with $a_i$ taken to be zero for $i\in\{\ell,\dotsc,n-1\}$. Then the lemma implies that $\TFT_{\omega,\ell}(\vec{a})$ is computed correctly by Algorithm~\ref{alg:tft} if $\x{i}=a_{i,0}$ for $i\in\{0,\dotsc,\ell-1\}$ upon its termination. 

Define $m=\ceil{\log_2\ell}$ and $v=\ord_2\ell$, as in Line~\ref{tft:m} of Algorithm~\ref{alg:tft}. Then Lemma~\ref{lem:level_m-1} implies $\x{i}=a_{m-1,i}$ for $i\in\{0,\dotsc,\ell-1\}$ after Lines~\ref{tft:level_m-1} and~\ref{tft:level_m-1_end} of the algorithm have been performed. If $\ell=2^m$, the algorithm then proceeds directly to the loop of Lines~\ref{tft:remainder} to~\ref{tft:remainder_end}. In this case, it follows immediately from the definition of the $a_{k,i}$ that each iteration of the loop sets $\x{i}=a_{k,i}$ for $i\in\{0,\dotsc,\ell-1\}$, where $k$ is the corresponding value of the loop variable. Thus, the algorithm terminates with $\x{i}=a_{i,0}$ for $i\in\{0,\dotsc,\ell-1\}$ in this case. Therefore, we may assume that $\ell<2^m$. Then $v\leq m-2$, since it is never equal to $m-1$.

For $k\in\{0,\dotsc,m-1\}$, define $q_k=\floor{\ell/2^{k+1}}$, $r_k=\ell-2^{k+1}q_k$ and $q_k'=q_k-2^{m-2-k}$, so as to reflect the values $q$, $r$ and $q'$ computed in Lines~\ref{tft:q_and_r}, \ref{tft:restore_k} and~\ref{tft:q} of the algorithm. Furthermore, define index sets
\begin{equation*}
	R_k
	=
	\begin{cases}
		\{
			i\in\Z
			\mid
			2^{m-1}\leq i<\ell
		\}
		& \text{if $k=m-1$}, \\
		\{
			i\in\Z
			\mid
			2^k(2q_k)\leq i<\ell
		\}
		& \text{if $k< m-1$},
	\end{cases}
\end{equation*}
and
\begin{equation*}
	R_k'
	=
	\begin{cases}
		\{
			i\in\Z
			\mid
			\ell-2^{m-1}\leq i<2^k(2q_k'+2)
		\}
		&\text{if $r_k>2^k$},\\
		\{
			i\in\Z
			\mid
			\ell-2^{m-1}\leq i<2^k(2q_k'+1)
		\}
		&\text{if $r_k\leq 2^k$}.
	\end{cases}
\end{equation*}
Then in Lines~\ref{tft:push_down_right} to~\ref{tft:push_down_right_end} of the algorithm (and also Lines~\ref{tft:restore} to~\ref{tft:restore_end}), indices of the form $2^k(2q)+j$ or $2^k(2q+1)+j$ always belong to $R_{m-1}=\{2^{m-1},\dotsc,\ell-1\}$, while indices of the form $2^k(2q')+j$ or $2^k(2q'+1)+j$ always belong to $R_{m-1}'=\{\ell-2^{m-1},\dotsc,2^{m-1}-1\}$. Moreover, for each value of the loop variable $k$, the set $R_k\cup R_k'$ consists of the indices $i$ for which $\x{i}$ is overwritten during the corresponding iteration of the loop. For $k\in\{v,\dotsc,m-2\}$,
\begin{equation}\label{eqn:q_k-recursion}
	q_k
	=
	\begin{cases}
		2q_{k+1}+1 & \text{if $r_{k+1}>2^{k+1}$},     \\
		2q_{k+1}   & \text{if $r_{k+1}\leq 2^{k+1}$}.
	\end{cases}
\end{equation}
Consequently, $R_v\subseteq R_{v+1}\subseteq\dotsb\subseteq R_{m-1}$ and $\emptyset=R_v'\subseteq R_{v+1}'\subseteq\dotsb\subseteq R_{m-1}'$.

We now use induction to determine the effect of the loop in Lines~\ref{tft:push_down_right} to~\ref{tft:push_down_right_end}. As $\x{i}=a_{m-1,i}$ for $i\in\{0,\dotsc,\ell-1\}$ after Lines~\ref{tft:level_m-1} and~\ref{tft:level_m-1_end} have been performed, Lemma~\ref{lem:level_m-1} implies that $\x{i}=a_{m-1,i}$ for $i\in R_{m-1}$, and $\x{i}=a_{m-1,i}=a_{m-1,2^{m-1}+i}$ for $i\in R_{m-1}'$, at the beginning of the first iteration of the loop. Suppose more generally that $\x{i}=a_{k+1,i}$ for $i\in R_{k+1}$, and $\x{i}=a_{k+1,2^{m-1}+i}$ for $i\in R_{k+1}'$, at the beginning of some iteration of the loop, where $k\in\{v,\dotsc,m-2\}$ is the corresponding value of the loop variable. If $r_k>2^k$, then
\begin{equation*}
	a_{k+1,2^k(2q_k)+j}
	=
	\x{2^k(2q_k)+j}
	\quad\text{and}\quad
	a_{k+1,2^k(2q_k+1)+j}
	=
	\begin{cases}
		\x{2^k(2q_k+1)+j}  &\text{if $j<r_k-2^k$},    \\
		\x{2^k(2q_k'+1)+j} &\text{if $j\geq r_k-2^k$},
	\end{cases}
\end{equation*}
for $j\in\{0,\dotsc,2^k-1\}$, since $R_k\subseteq R_{k+1}$ and $R_k'\subseteq R_{k+1}'$.
Thus, \eqref{eqn:butterfly} implies that Lines~\ref{tft:push_down_both} to~\ref{tft:push_down_both_end} compute $a_{k,2^k(2q_k)+j}$ and $a_{k,2^k(2q_k+1)+j}$ for $j\in\{0,\dotsc,2^k-1\}$, with $a_{k,i}$ written to $\x{i}$ for $i\in R_k$, and $a_{k,2^{m-1}+i}$ written to $\x{i}$ for $i\in R_k'$. If $r_k\leq 2^k$, then
\begin{equation*}
	a_{k+1,2^k(2q_k)+j}
	=
	\begin{cases}
		\x{2^k(2q_k)+j}  &\text{if $j<r_k$},     \\
		\x{2^k(2q_k')+j} &\text{if $j\geq r_k$},
	\end{cases}
	\quad\text{and}\quad
	a_{k+1,2^k(2q_k+1)+j}
	=
	\x{2^k(2q_k'+1)+j}
\end{equation*}
for $j\in\{0,\dotsc,2^k-1\}$, since $R_k\subseteq R_{k+1}$, $R_k'\subseteq R_{k+1}'$ and~\eqref{eqn:q_k-recursion} implies that
\begin{equation}\label{eqn:R_prime_setminus}
	\left\{
		i\in\Z
		\mid
		2^k(2q_k'+1)\leq i< 2^k(2q_k'+1)+2^k
	\right\}
	=
	R_{k+1}'\setminus R_k'.
\end{equation}
Thus, \eqref{eqn:butterfly} implies that Lines~\ref{tft:push_down_left_only} to~\ref{tft:push_down_left_only_end} compute $a_{k,2^k(2q_k)+j}$ for $j\in\{0,\dotsc,2^k-1\}$, with $a_{k,i}$ written to~$\x{i}$ for $i\in R_k$, and $a_{k,2^{m-1}+i}$ written to $\x{i}$ for $i\in R_k'$. It follows that each iteration of the loop in Lines~\ref{tft:push_down_right} to~\ref{tft:push_down_right_end} sets $\x{i}=a_{k,i}$ for $i\in R_k$, and $\x{i}=a_{k+1,2^{m-1}+i}$ for $i\in R_k'$, where $k$ is the corresponding value of the loop variable. Therefore, after Lines~\ref{tft:push_down_right} to~\ref{tft:push_down_right_end} have been performed, the state of the array $(\x{0},\dotsc,\x{\ell-1})$ is summarised as follows:
\begin{equation}\label{eqn:after_push_down_right}
	\x{i}=
	\begin{cases}
		a_{m-1,i}         & \text{if $i\notin R_{m-1}\cup R_{m-1}'$}, \\
		a_{k+1,i}         & \text{if $i\in R_{k+1}\setminus R_k$ for some $k\geq v$},    \\
		a_{k+1,i+2^{m-1}} & \text{if $i\in R_{k+1}'\setminus R_k'$ for some $k\geq v$},  \\
		a_{v,i}           & \text{if $i\in R_v$}.
	\end{cases}
\end{equation}

We now use induction to determine the effect of the loop in Lines~\ref{tft:restore} to~\ref{tft:restore_end}. Therefore, suppose that
$\x{i}=a_{k,2^{m-1}+i}$ for $i\in R_{k-1}'$ at the beginning of some iteration of the loop, where $k\in\{v+1,\dotsc,m-2\}$ is the corresponding value of the loop variable. Then $\x{i}=a_{k,2^{m-1}+i}$ for $i\in R_{k-1}'\cup(R_k'\setminus R_{k-1}')=R_k'$, $\x{i}=a_{k,i}$ for $i\in R_k\setminus R_{k-1}$, and $\x{i}=a_{k+1,2^{m-1}+i}$ for $i\in R_{k+1}'\setminus R_k'$, since \eqref{eqn:after_push_down_right} held after Lines~\ref{tft:push_down_right} to~\ref{tft:push_down_right_end} were performed, while any previous iterations of the loop only modified~$\x{i}$ for $i\in R_{v+1}'\cup\dotsb\cup R_{k-1}'=R_{k-1}'$. Thus, if $r_k>2^k$, then $\x{2^k(2q_k'+1)+j}=a_{k,2^k(2q_k+1)+j}$ and $\x{2^k(2q_k)+j}=a_{k,2^k(2q_k)+j}$ for $j\in\{r_k-2^k,\dotsc,2^k-1\}$, since~\eqref{eqn:q_k-recursion} implies that
\begin{equation}\label{eqn:R_setminus}
	R_k\setminus R_{k-1}
	=
	\left\{
		i\in\Z
		\mid
		2^k(2q_k)\leq i<2^k(2q_k)+2^k
	\right\}.
\end{equation}
Consequently, \eqref{eqn:inverse_butterfly} implies in this case that Lines~\ref{tft:push_up_right} and~\ref{tft:push_up_right_end} set $\x{i}=a_{k+1,2^{m-1}+i}$ for $i\in R_k'$. If $r_k\leq 2^k$, then $\x{2^k(2q_k')+j}=a_{k,2^k(2q_k)+j}$ and $\x{2^k(2q_k'+1)+j}=a_{k+1,2^k(2q_k+1)+j}$ for $j\in\{r_k,\dotsc,2^k-1\}$, since~\eqref{eqn:R_prime_setminus} holds. Thus, \eqref{eqn:vdH} implies that Lines~\ref{tft:push_up_left} and~\ref{tft:push_up_left_end} set $\x{i}=a_{k+1,2^{m-1}+i}$ for $i\in R_k'$. Therefore, we find that if $\x{i}=a_{k,2^{m-1}+i}$ for $i\in R_{k-1}'$ at the beginning of an iteration of the loop in Lines~\ref{tft:restore} to~\ref{tft:restore_end}, then $\x{i}=a_{k+1,2^{m-1}+i}$ for $i\in R_k'$ at the end of the iteration. As $R_v'$ is the empty set, it follows that the overall effect of loop is to set $\x{i}=a_{m-1,2^m+i}$ for $i\in R_{m-2}'$. Hence, $\x{i}=a_{m-1,2^m+i}=a_{m-1,i}$ for $i\in(R_{m-1}'\setminus R_{m-2}')\cup R_{m-2}'=R_{m-1}'$ after the loop has been performed.

Redefine $R_{m-1}$ to be the set $\{0,\dotsc,\ell-1\}$. Then, as $R_0=\dotsb=R_{v-1}=\emptyset$ if $v>0$, the state of the vector $(\x{0},\dotsc,\x{\ell-1})$ after Lines~\ref{tft:restore} to~\ref{tft:restore_end} have been performed is summarised as follows:
\begin{equation}\label{eqn:after_swap}
	\x{i}
	=
	\begin{cases}
		a_{k+1,i} & \text{if $i\in R_{k+1}\setminus R_k$}, \\
		a_{0,i}   & \text{if $i\in R_0$}.
	\end{cases}
\end{equation}
For $k\in\{0,\dotsc,m-1\}$, define $L_k=\{i\in\Z\mid 0\leq i<2^k(2q_k)\}$. Then for each value of the loop variable $k$ in Line~\ref{tft:remainder}, the set $L_k$ consists of the indices~$i$ for which~$\x{i}$ is modified by Lines~\ref{tft:remainder_k} to~\ref{tft:remainder_k_end} during the corresponding iteration of the loop. Moreover, $L_k$ is the complement of $R_k$ in $\{0,\dotsc,\ell-1\}$ for $k\in\{0,\dotsc,m-1\}$, and $L_0\supseteq L_1\supseteq\dotsb\supseteq L_{m-1}=\emptyset$.

Suppose that $\x{i}=a_{k+1,i}$ for $i\in L_{k+1}$ at the beginning of an iteration of the loop in Lines~\ref{tft:remainder} to~\ref{tft:remainder_end}, where $k$ is the corresponding value of the loop variable. Then $\x{i}=a_{k+1,i}$ for $i\in L_{k+1}\cup(R_{k+1}\setminus R_k)=L_k$, since~\eqref{eqn:after_swap} held after Lines~\ref{tft:restore} to~\ref{tft:restore_end} were performed, while any previous iterations of the loop only modified~$\x{i}$ for $i\in L_{k+1}\cup\dotsb\cup L_{m-1}=L_{k+1}$. Thus, \eqref{eqn:butterfly} implies that $\x{i}=a_{k,i}$ for $i\in L_k$ at the end of the iteration. As $L_{m-1}=\emptyset$, it holds trivially that $\x{i}=a_{m-1,i}$ for $i\in L_{m-1}$ for at the beginning of the first iteration of the loop in Lines~\ref{tft:remainder} to~\ref{tft:remainder_end}. Therefore, $\x{i}=a_{0,i}$ for $i\in L_0$ once all iterations of the loop have been performed. However, it also holds that $\x{i}=a_{0,i}$ for $i\in R_0=\{0,\dotsc,\ell-1\}\setminus L_0$, since~\eqref{eqn:after_swap} held after Lines~\ref{tft:restore} to~\ref{tft:restore_end} were performed, while Lines~\ref{tft:remainder} to~\ref{tft:remainder_end} only modified $\x{i}$ for $i\in L_0\cup\dotsb\cup L_{m-2}=L_0$. It follows that the algorithm terminates with $\x{i}=a_{i,0}$ for $i\in\{0,\dotsc,\ell-1\}$, as required.
\end{proof}

\subsection{Avoiding division by two}\label{sec:avoiding_inv_2}

It is possible to modify Algorithm~\ref{alg:tft} so as to avoid the use of the map $\Div_2$, and without incurring any additional costs.
If Lines~\ref{tft:push_down_both_ii} and~\ref{tft:push_down_both_ii_end} are performed for some $k\in\{v,\dotsc,m-2\}$, then the values stored in the entries $\x{2^k(2q)+j}$ for $j\in\{r-2^k,\dotsc,2^k-1\}$ by the loop may be modified, so long as they are then set to the correct values when Lines~\ref{tft:push_up_right} and~\ref{tft:push_up_right_end} are performed for the same value of~$k$. Using the notation of the proof of Lemma~\ref{lem:correctness}, we see that such a modification is possible since only entries $\x{i}$ with indices belonging to $R_{k-1}\cup R_{k-1}'$ are accessed or modified during the intervening period, while~\eqref{eqn:R_setminus} implies that the entries $\x{2^k(2q_k)+j}$ for $j\in\{r_k-2^k,\dotsc,2^k-1\}$ have indices that belong to $R_k\setminus R_{k-1}=R_k\setminus(R_{k-1}\cup R_{k-1}')$. Consequently, the algorithm may be modified so that Line~\ref{tft:rightmost_butterfly_ii} replaces $(\x{2^k(2q)+j},\x{2^k(2q'+1)+j})^t$ by $A(\x{2^k(2q)+j},\x{2^k(2q'+1)+j})^t$, and Line~\ref{tft:push_up_right_end} replaces $(\x{2^k(2q)+j},\x{2^k(2q'+1)+j})^t$ by $B(\x{2^k(2q)+j},\x{2^k(2q'+1)+j})^t$, for $2\times 2$ matrices $A$ and $B$ over $\A$ such that
\begin{equation*}
	\begin{pmatrix}
		0 & 1
	\end{pmatrix}
	A
	=
	\begin{pmatrix}
		1 & -\omega^{[2q]_p}
	\end{pmatrix}
	\quad\text{and}\quad
	BA
	=
	\begin{pmatrix}
		1 & \omega^{[2q]_p} \\
		0 & 1 
	\end{pmatrix}.
\end{equation*}
Indeed, the requirement on the bottom row of $A$ ensures that Line~\ref{tft:rightmost_butterfly_ii} assigns the correct value to $\x{2^k(2q'+1)+j}$, while the requirement on the product $BA$ ensures that Line~\ref{tft:push_up_right_end} reverts the effect of Line~\ref{tft:rightmost_butterfly_ii} on $\x{2^k(2q'+1)+j}$, and corrects the value stored in $\x{2^k(2q)+j}$.

If $2$ is a unit in $\A$, then Algorithm~\ref{alg:tft} in its current form is equivalent to taking
\begin{equation*}
	A
	=
	\begin{pmatrix}
		1 &  \omega^{[2q]_p} \\
		1 & -\omega^{[2q]_p} 
	\end{pmatrix}
	\quad\text{and}\quad
	B
	=
	\begin{pmatrix}
		1                      & 0                       \\
		2^{-1}\omega^{-[2q]_p} & -2^{-1}\omega^{-[2q]_p}
	\end{pmatrix}.
\end{equation*}
To remove the use of the map $\Div_2$, along with the assumption that $2$ is not a zero-divisor in $\A$, we consider the choice
\begin{equation*}
	A
	=
	\begin{pmatrix}
		0 & 1               \\
		1 & -\omega^{[2q]_p} 
	\end{pmatrix}
	\quad\text{and}\quad
	B
	=
	\begin{pmatrix}
		2\omega^{[2q]_p} & 1 \\
		1                & 0
	\end{pmatrix},
\end{equation*}
which implies that Algorithm~\ref{alg:tft} may be modified as follows: replace Line~\ref{tft:rightmost_butterfly_ii} by
\begin{equation}\label{eqn:modified_push_down_both_end}
	\begin{pmatrix}
		\x{2^k(2q)  +j} \\
		\x{2^k(2q'+1)+j}
	\end{pmatrix}
	\set
	\begin{pmatrix}
		0 &  1 \\
		1 & -\alpha
	\end{pmatrix}
	\begin{pmatrix}
		\x{2^k(2q)  +j} \\
		\x{2^k(2q'+1)+j}
	\end{pmatrix},
\end{equation}
replace Line~\ref{tft:alpha_push_up_right} by $\alpha\set\omega^{[2q]_p}$, and replace Line~\ref{tft:push_up_right_end} by
\begin{equation}\label{eqn:modified_push_up_right_end}
	\begin{pmatrix}
		\x{2^k(2q)  +j} \\
		\x{2^k(2q'+1)+j}
	\end{pmatrix}
	\set
	\begin{pmatrix}
		2\alpha &  1 \\
		1       &  0
	\end{pmatrix}
	\begin{pmatrix}
		\x{2^k(2q)  +j} \\
		\x{2^k(2q'+1)+j}
	\end{pmatrix}.
\end{equation}
Performing~\eqref{eqn:modified_push_down_both_end} rather than Line~\ref{tft:rightmost_butterfly_ii} saves an addition, while performing~\eqref{eqn:modified_push_up_right_end} instead of Line~\ref{tft:push_up_right_end} replaces the cost of evaluating the map $\Div_2$ with that of multiplying by $2$.

\subsection{Computing the twiddle factors}\label{sec:twiddle_factors}

We now describe how the twiddle factors are computed in Algorithm~\ref{alg:tft}. Therefore, suppose that the algorithm has been called to compute a length $\ell$ transform. Let $m=\ceil{\log_2\ell}$ and $v=\ord_2\ell$, as defined in Line~\ref{tft:m} of the algorithm. Furthermore, let $q_k=\floor{\ell/2^{k+1}}$ for $k\in\{0,\dotsc,m-1\}$, so as to reflect the values of $q$ computed in Lines~\ref{tft:q_and_r}, \ref{tft:restore_k} and~\ref{tft:q}.

For the computation of the twiddle factors in Lines~\ref{tft:push_down_right_k}, \ref{tft:alpha_push_up_right} and~\ref{tft:alpha_push_up_left} we may assume that $v<m$, since otherwise the lines are not performed by the algorithm. Then, for $k\in\{v,\dotsc,m-1\}$, we have
\begin{equation*}\label{eqn:alg_tft_twiddle_factors}
	\omega^{[2q_k]_p}
	=
	\left(\omega^{n/2^m}\right)^{2^k[q_k]_{m-k-1}}
	\quad\text{and}\quad
	\omega^{-[2q_k]_p}
	=
	\left(\omega^{n/2^m}\right)^{2^k(2^{m-k}-[q_k]_{m-k-1})},
\end{equation*}
since $q_k<2^{m-k-1}$. As $\omega^{n/2^{\ceil{\log_2\ell}}}=\omega^{n/2^m}$ is provided as an input to Algorithm~\ref{alg:tft}, it follows that the twiddle factors of Lines~\ref{tft:push_down_right_k} and~\ref{tft:alpha_push_up_left}, and the inverse twiddle factors of Line~\ref{tft:alpha_push_up_right}, can be computed by a square and multiply algorithm (see~\cite{knuth1997}) with $\bigO(1)$ space and an individual cost of $\bigO(m)$ multiplications by powers of~$\omega$. We propose to use this approach for Lines~\ref{tft:push_down_right_k}, \ref{tft:alpha_push_up_right} and~\ref{tft:alpha_push_up_left} of the algorithm, at a total cost of $\bigO(\log^2\ell)$ multiplications by powers of $\omega$. However, we also note that Lines~\ref{tft:alpha_push_up_right} and~\ref{tft:alpha_push_up_left} can be implemented with a total cost of $\bigO(\log\ell)$ multiplications by powers of $\omega$, since~\eqref{eqn:q_k-recursion} implies that
\begin{equation*}
	\omega^{[2q_{k+1}]_p}
	=
	\begin{cases}
		-\left(\omega^{[2q_k]_p}\right)^2
		& \text{if $r_{k+1}>2^{k+1}$},\\
		\left(\omega^{[2q_k]_p}\right)^2
		& \text{if $r_{k+1}\leq 2^{k+1}$}, 
	\end{cases}
\end{equation*}
for $k\in\{v,\dotsc,m-2\}$.

Line~\ref{tft:twiddle_set} asks that we compute the pairs $(i,\omega^{[2i]_p})$ for $i\in\{1,\dotsc,q-1\}$, where $q\in\{1,\dotsc,2^{m-1}\}$. We implement the line by generalising the approach of Section~\ref{sec:FFT} in a manner that is reminiscent of the decomposition of Lemma~\ref{lem:tft_decomposition}. Therefore, suppose that $q\in\{1,\dotsc,2^{m-1}\}$. Then $q$ may be written in the form $\sum^t_{r=1}2^{i_r}$ for $t\geq 1$ nonnegative integers $i_1>i_2>\dotsb>i_t$. The sets $\{2^{i_1}+\dotsb+2^{i_{s-1}}+[j]_{i_s}\mid j\in\{0,\dotsc,2^{i_s}-1\}\}$ for $s\in\{1,\dotsc,t\}$ then form a partition of $\{0,\dotsc,q-1\}$. Furthermore,
\begin{equation*}
	[2(2^{i_1}+\dotsb+2^{i_{s-1}}+[j]_{i_s})]_p
	=
	2^{p-1-i_s}j
	+
	\sum^{s-1}_{r=1}
	2^{p-2-i_r}
\end{equation*}
for $s\in\{1,\dotsc,t\}$ and $j\in\{0,\dotsc,2^{i_s}-1\}$. Thus, if $q<2^{m-1}$, then $\omega^{[2(2^{i_1}+\dotsb+2^{i_{s-1}}+[j]_{i_s})]_p}=\tau_s(\lambda^2_s)^{j}$ for $s\in\{1,\dotsc,t\}$ and $j\in\{0,\dotsc,2^{i_s}-1\}$, where $\lambda_1,\dotsc,\lambda_t,\tau_1,\dotsc,\tau_t\in\A$ are defined recursively as follows:
\begin{equation*}
	\lambda_r
	=
	\begin{cases}
		\left(\omega^{n/2^m}\right)^{2^{m-2-i_1}}        & \text{if $r=1$},\\
		\lambda^{2^{i_{r-1}-i_r}}_{r-1} & \text{if $r>1$},
	\end{cases}
	\quad\text{and}\quad
	\tau_r
	=
	\begin{cases}
	1                       & \text{if $r=1$},\\
	\tau_{r-1}\lambda_{r-1} & \text{if $r>1$}.
	\end{cases}
\end{equation*}
If $q=2^{m-1}$, then $t=1$ and $\omega^{[2[j]_{i_1}]_p}=(\omega^{n/2^m})^j$ for $j\in\{0,\dotsc,2^{i_1}-1\}$. Therefore, computing the pairs $(i,\omega^{[2i]_p})$ for $i\in\{1,\dotsc,q-1\}$ reduces to computing the terms of $t$ geometric progressions whose first terms and common ratios are related by simple recurrence relations. We derive Algorithm~\ref{alg:pairs} from these observations, which is presented in the style of a generator that may be called by Algorithm~\ref{alg:tft}. Assuming that repeated squaring is used for exponentiation, the algorithm performs $q+\bigO(m)$ multiplications by powers of~$\omega$, and uses $\bigO(1)$ space.

\begin{algorithm}[ht]
	\caption{$\textproc{GeneratePairs}(m,\omega^{n/2^m},q)$}
	\label{alg:pairs}
	\begin{algorithmic}[1]
		\Require $m\in\{1,\dotsc,p\}$, the precomputed element $\omega^{n/2^m}$, and $q\in\{1,\dotsc,2^{m-1}\}$.
		\renewcommand{\algorithmicensure}{\textbf{Output:}}
		\Ensure the pairs $(i,\omega^{[2i]_p})$ for $i\in\{1,\dotsc,q-1\}$, yielded one at a time.
		\State $i\set\floor{\log_2 q}$, $\tau\set 1$, $e\set\min(m-1-i,1)$, $\lambda\set(\omega^{n/2^m})^{2^{m-1-i-e}}$, $\mu\set\lambda^{2^e}$, $\theta\set\tau$
		\For{$j=1,\dotsc,2^i-1$}
			\State $\theta\set\theta\mu$
			\State \textbf{yield} $([j]_i,\theta)$
		\EndFor
		\State $o\set 2^i$
		\While{$q>o$}
			\State $i'\set i$, $i\set\floor{\log_2(q-o)}$, $\tau\set\tau\lambda$, $\lambda\set\lambda^{2^{i'-i}}$, $\mu\set\lambda^2$, $\theta\set\tau$
			\State \textbf{yield} $(o,\theta)$
			\For{$j=1,\dotsc,2^i-1$}
				\State $\theta\set\theta\mu$
				\State \textbf{yield} $(o+[j]_i,\theta)$
			\EndFor
			\State $o\set o+2^i$
		\EndWhile
	\end{algorithmic}
\end{algorithm}

\subsection{Complexity}

Recall that Lines~\ref{tft:m} to~\ref{tft:push_down_right_end} of Algorithm~\ref{alg:tft} compute $a_{k,2^{k+1}\floor{\ell/2^{k+1}}},\dotsc,a_{k,2^k\ceil{\ell/2^k}-1}$ for $k=m-1,m-2,\dotsc,v$, while Lines~\ref{tft:remainder} to~\ref{tft:remainder_end} compute $a_{k,0},\dotsc,a_{k,2^{k+1}\floor{\ell/2^{k+1}}-1}$ for $k=m-2,m-3,\dotsc,0$. Thus, the combined effect of these lines is to perform van der Hoeven's TFT algorithm. It follows that the difference in complexities between Algorithm~\ref{alg:tft} and van der Hoeven's algorithm is equal to cost of Lines~\ref{tft:restore} to~\ref{tft:restore_end}, and the computation of the twiddle factors, which are precomputed in van der Hoeven's algorithm. If the modifications of Section~\ref{sec:avoiding_inv_2} are applied, then the difference is bounded by the costs of Lines~\ref{tft:restore} to~\ref{tft:restore_end}, and the computation of the twiddle factors. We now show that these costs amount to only $\bigO(\ell)$ operations in $\A$ for the modified algorithm, so that the difference in complexities between the algorithm and van der Hoeven's algorithm, and thus also Arnold's algorithm, is confined to lower order terms.

\begin{lemma}\label{lem:tft_complexity} Suppose that the modifications of Section~\ref{sec:avoiding_inv_2} are applied to Algorithm~\ref{alg:tft} with the multiplication by $2$ in~\eqref{eqn:modified_push_up_right_end} implemented as an addition. Then the algorithm performs at most $\sum^t_{r=1}(\ell_r/2)\log_2\ell_r+2\ell+\bigO(\log^2\ell)$ multiplications by powers of $\omega$, and at most $\ell\floor{\log_2\ell}+2\ell$ additions or subtractions, where $\ell_1>\ell_2>\dotsb>\ell_t$ are integer powers of two such that $\ell=\sum^t_{r=1}\ell_r$.
\end{lemma}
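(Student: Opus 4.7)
The plan is to tabulate the arithmetic cost of each phase of Algorithm~\ref{alg:tft} (as modified in Section~\ref{sec:avoiding_inv_2}, with the multiplication by $2$ in~\eqref{eqn:modified_push_up_right_end} implemented as an addition) and sum. By direct inspection: Phase~1 contributes $2(\ell-2^{m-1})$ additions and no multiplications; at each level $k\in\{0,\dotsc,m-2\}$, Phase~4 contributes $2q_k\cdot 2^k$ additions and $(q_k-1)\cdot 2^k$ multiplications by powers of~$\omega$; and at each level $k\in\{v,\dotsc,m-2\}$, Phase~2 contributes $2^k$ multiplications, delivered either as a full block of $2^k$ butterflies (if $r_k>2^k$) or as a half-block of $2^k$ half-butterflies (if $0<r_k\leq 2^k$). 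The Phase~3 multiplication count at level $k\in\{v+1,\dotsc,m-2\}$ works out to $\delta_k:=2^k\ceil{r_k/2^k}-r_k$ multiplications by powers of $\omega$ in either sub-case.

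The key technical observation, which drives the tight $+2\ell$ constant in both bounds, is that with the modification the combined additive cost of Phases~2, 3, and~4 at a single level $k\in\{v+1,\dotsc,m-2\}$ simplifies to exactly $\ell+2\delta_k$ \emph{uniformly} in both sub-cases (the saving of one addition in Phase~2 Loop~2 from taking $A$ with a zero row compensates for the addition produced by the doubling in $B$). Summing this identity over $k$, together with the $\ell$ additions contributed by each level $k\in\{0,\dotsc,v\}$ and the $2(\ell-2^{m-1})$ additions of Phase~1, gives a total additive cost of $(m+1)\ell-2^m+2\sum_{k=v+1}^{m-2}\delta_k$. Using the crude bound $\delta_k\leq 2^k$ yields $\sum\delta_k\leq 2^{m-1}-2^{v+1}$, and hence a total at most $\ell\floor{\log_2\ell}+2\ell-2^{v+2}\leq\ell\floor{\log_2\ell}+2\ell$ (for $\ell$ not a power of two; the case $\ell=2^m$ is a separate easy check in which Phases~2 and~3 vanish).

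For multiplications, summing the per-phase counts gives $\sum_{k=0}^{m-2}q_k\cdot 2^k+1-2^v+\sum_{k=v+1}^{m-2}\delta_k$ forward multiplications. The identity $q_k\cdot 2^k=\sum_{r:\,i_r\geq k+1}2^{i_r-1}$, derived from $\ell=\sum_r 2^{i_r}$, followed by swapping the order of summation, gives $\sum_{k=0}^{m-2}q_k\cdot 2^k=\sum_r(\ell_r/2)\min(m-1,i_r)$, which equals $\sum_r(\ell_r/2)\log_2\ell_r$ when $\ell$ is not a power of two. Adding the twiddle factor cost, which is $O(\log^2\ell)$ multiplications for the outer twiddles of Lines~\ref{tft:push_down_right_k}, \ref{tft:alpha_push_up_right}, and~\ref{tft:alpha_push_up_left} together with $q_k+O(\log\ell)$ per invocation of Algorithm~\ref{alg:pairs} (total $\sum_k q_k+O(\log^2\ell)\leq\ell+O(\log^2\ell)$), yields the claimed bound $\sum_r(\ell_r/2)\log_2\ell_r+2\ell+O(\log^2\ell)$ on the number of multiplications by powers of~$\omega$. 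The main obstacle is spotting the uniform identity $\ell+2\delta_k$ for the combined additive cost, enabled by the modification of Section~\ref{sec:avoiding_inv_2}; once it is in hand, the remaining steps are the binary-decomposition sum swap and routine bookkeeping.
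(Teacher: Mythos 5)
Your proposal is correct and takes essentially the same approach as the paper: both tabulate the per-line arithmetic cost of each phase level-by-level, sum over levels, and relate the totals to the binary decomposition of $\ell$. Your unified per-level identity $\ell+2\delta_k$ for the combined additive cost of Phases 2, 3, and 4 is a neat repackaging of the paper's separate bounds on the Phase~2+3 contribution ($\leq 2^{k+1}+2^k\ell_k$) and the Phase~4 contribution ($2^{k+1}q_k$), but the underlying bookkeeping coincides.
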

\begin{proof} Suppose that the modified algorithm has been called to evaluate a length $\ell$ transform. Let $m=\ceil{\log_2\ell}$ and $v=\ord_2\ell$, as defined in Line~\ref{tft:m} of the algorithm. Furthermore, define $q_k=\floor{\ell/2^{k+1}}$ and $r_k=\ell-2^{k+1}q_k$ for $k\in\{0,\dotsc,m-1\}$, so as to reflect the values of $q$ and $r$ computed in Lines~\ref{tft:q_and_r}, \ref{tft:restore_k} and~\ref{tft:q}. Write $\ell=\sum^m_{k=0}2^k\ell_k$ with $\ell_0,\dotsc,\ell_m\in\{0,1\}$, which we note is different to how~$\ell$ is written in the statement of the lemma. All multiplications in the algorithm involve a power of $\omega$. Consequently, we simply refer to them as multiplications hereafter. Similarly, we do not distinguish between additions and subtractions, since they are counted together.
	
Lines~\ref{tft:level_m-1} and~\ref{tft:level_m-1_end} perform no multiplications and $2(\ell-2^{m-1})$ additions. The definition of $v$ implies that $\ell_v=1$ and $r_v=2^v$. Thus, Lines~\ref{tft:push_down_right_k} to~\ref{tft:push_down_right_k_end} perform at most $2^v$ additions for $k=v$. It also follows that for $k\in\{v+1,\dotsc,m-2\}$, $r_k>2^k$ if and only if $\ell_k=1$. Consequently, for $k\in\{v+1,\dotsc,m-2\}$ such that $r_k>2^k$, Lines~\ref{tft:push_down_right_k} to~\ref{tft:push_down_right_k_end} and Lines~\ref{tft:restore_k} to~\ref{tft:restore_k_end} perform a combined total of $r_k+2(2^{k+1}-r_k)=2^{k+1}+(2^{k+1}-r_k)\leq 2^{k+1}+2^k\ell_k$ additions. Similarly, for $k\in\{v+1,\dotsc,m-2\}$ such that $r_k\leq 2^k$, Lines~\ref{tft:push_down_right_k} to~\ref{tft:push_down_right_k_end} and Lines~\ref{tft:restore_k} to~\ref{tft:restore_k_end} perform a combined total $2^k+(2^k-r_k)\leq 2^{k+1}+2^k\ell_k$ additions. Therefore, Lines~\ref{tft:push_down_right} to~\ref{tft:restore_end} perform at most
\begin{equation*}
	2^v
	+
	\sum^{m-2}_{k=v+1}
	\left(2^{k+1}+2^k\ell_k\right)
	\leq
	\ell+2^{m-1}
\end{equation*}
additions. Excluding the computation of the twiddle factors, the number of multiplications by performed by Lines~\ref{tft:push_down_right} to~\ref{tft:restore_end} is easily seen to be bounded by the number of additions they perform. Moreover, we know from Section~\ref{sec:twiddle_factors} that the total cost of computing the twiddle factors in Lines~\ref{tft:push_down_right} to~\ref{tft:restore_end} is $\bigO(\log^2\ell)$ multiplications. It follows that Lines~\ref{tft:push_down_right} to~\ref{tft:restore_end} perform at most $\ell+2^{m-1}+\bigO(\log^2\ell)$ multiplications.

For each $k\in\{0,\dotsc,m-2\}$, Lines~\ref{tft:remainder_k} to~\ref{tft:remainder_k_end} perform $2^k(q_k-1)+q_k+\bigO(\log\ell)$ multiplications, and $2^{k+1}q_k$ additions, since we know from Section~\ref{sec:twiddle_factors} that  Line~\ref{tft:twiddle_set} performs $q_k+\bigO(\log\ell)$ multiplications. We have
\begin{equation*}
	\sum^{m-2}_{k=0}2^{k+1}q_k
	=
	\sum^{m-2}_{k=0}
	\sum^m_{i=k+1}2^i\ell_i
	=
	\sum^m_{i=0}2^ii\ell_i
	-2^m\ell_m
	\leq
	\ell\floor{\log_2\ell}
	-
	(\ell-2^{m-1})
\end{equation*}
and
\begin{equation*}
	\sum^{m-2}_{k=0}
	\left(q_k-2^k\right)
	\leq
	\sum^{m-2}_{k=0}
	\left(
		\frac{\ell}{2^{k+1}}
		-
		2^k
	\right)
	\leq
	\ell-2^{m-1}+1.
\end{equation*}
Thus, Lines~\ref{tft:remainder} to~\ref{tft:remainder_end} perform at most $\sum^m_{i=0}2^{i-1}i\ell_i+\ell-2^{m-1}+\bigO(\log^2\ell)$ multiplications, and at most $\ell\floor{\log_2\ell}-(\ell-2^{m-1})$ additions. Summing each set of bounds and discarding the terms with $\ell_i=0$ then completes the proof.
\end{proof}

Algorithm~\ref{alg:tft} and Arnold's TFT algorithm both reduce to the algorithm of Section~\ref{sec:FFT} (without the permutation) when $\ell$ is a power of two. If $\ell$ is not a power of two, then Arnold's algorithm performs approximately $\sum^t_{r=1}(\ell_r/2)\log_2\ell_r$ multiplications, where the error made is $\bigO(\log^2\ell)$, in order to evaluate the TFTs of the decomposition in Lemma~\ref{lem:tft_decomposition}. To compute the polynomials $f_r(x/\omega^{e_r})$ that are evaluated by the transforms, Arnold's algorithm first replaces the coefficients of $f$ by those of $f(\omega^{[\ell]_p}x)$ ``term-by-term'', by computing successive powers of $\omega^{[\ell]_p}$, and multiplying each power by the corresponding coefficient of $f$. By observing that $\omega^{[\ell]_p\ell_1}=-1$, one only has to compute the first $\ell_1$ powers of $\omega^{[\ell]_p}$. However, the observation also implies that these $\ell_1>\ell/2$ powers are distinct. Thus, without further improvements, Arnold's algorithm performs at least $1.5\ell$ multiplications when computing the coefficients of $f(\omega^{[\ell]_p}x)$. The coefficients of the residues $f_r$ are later replaced with those of $f_r(x/\omega^{e_r})$ by a similar method, requiring at least $\ell-\ceil{\log_2\ell}$ multiplications. Thus, Lemma~\ref{lem:tft_complexity} implies that Algorithm~\ref{alg:tft} performs $\Omega(\ell)$ fewer multiplications than Arnold's algorithm when $\ell$ is not a power of two. 

\section{An in-place inverse truncated Fourier transform}\label{sec:itft}

Assume for simplicity that $2$ is a unit in $\A$. Then Algorithm~\ref{alg:tft} is comprised of a series of invertible linear transformations. Consequently, the truncated Fourier transform can be inverted in this case by proceeding backwards through the algorithm and inverting each transformation. However, this approach does not immediately yield an algorithm with complexity similar to that of Algorithm~\ref{alg:tft}, since the transformations of  Lines~\ref{tft:level_m-1_butterfly}, \ref{tft:rightmost_butterfly_i}, \ref{tft:rightmost_butterfly_ii}, \ref{tft:butterfly_0} and~\ref{tft:butterfly} are less expensive to evaluate than their inverses:
\begin{equation*}
	\begin{pmatrix}
		1 &  \alpha \\
		1 & -\alpha
	\end{pmatrix}^{-1}
	=
	2^{-1}
	\begin{pmatrix}
		1           & 1            \\
		\alpha^{-1} & -\alpha^{-1}
	\end{pmatrix}
	\quad\text{for $\alpha\in\A^\times$}.
\end{equation*}

We can address this problem when $\ell$ is a power of two, in which case it is only necessary to invert the transformations of Lines~\ref{tft:level_m-1_butterfly}, \ref{tft:butterfly_0} and \ref{tft:butterfly}, by simply ignoring multiplications by $2^{-1}$ during the algorithm, and accounting for them afterwards by multiplying each entry of $(\x{0},\dotsc,\x{\ell-1})$ by $(2^{-1})^{\log_2\ell}$. The algorithm then performs only $\ell+\bigO(\log\log\ell)$ more multiplications than Algorithm~\ref{alg:tft}. To generalise this approach to arbitrary $\ell$, we weight the sequences defined in Lemma~\ref{lem:radix-2_FFT} by setting $\tilde{a}_{k,i}=2^ka_{k,i}$ for $k\in\{0,\dotsc,p\}$ and $i\in\{0,\dotsc,n-1\}$. Then~\eqref{eqn:butterfly} implies that
\begin{equation*}
	\begin{pmatrix}
		2\tilde{a}_{k,2^k(2i  )+j} \\
		2\tilde{a}_{k,2^k(2i+1)+j}
	\end{pmatrix}
	=
	\begin{pmatrix}
		1 &  \omega^{[2i]_p} \\
		1 & -\omega^{[2i]_p}
	\end{pmatrix}
	\begin{pmatrix}
		\tilde{a}_{k+1,2^k(2i  )+j} \\
		\tilde{a}_{k+1,2^k(2i+1)+j}
	\end{pmatrix}
\end{equation*}
for $k\in\{0,\dotsc,p-1\}$, $i\in\{0,\dotsc,2^{p-k-1}-1\}$ and $j\in\{0,\dotsc,2^k-1\}$. Thus, it is straight-forward to derive a version of Algorithm~\ref{alg:tft} that works by computing the values $\tilde{a}_{k,i}$ in-place of the values $a_{k,i}$. Reversing the steps of this new algorithm and inverting its transformations then yields Algorithm~\ref{alg:itft}.

\textfloatsep = 14.0pt plus 2.0pt minus 4.0pt

\begin{algorithm}
	\caption{
		$\textproc{ITFT}(
			\ell,
			\omega^{n/2^{\ceil{\log_2\ell}}},
			2^{-1},
			(\x{0},\dotsc,\x{\ell-1}))
		$}
	\label{alg:itft}
	\begin{algorithmic}[1]
		\Require the transform length $\ell\in\{1,\dotsc,n\}$, precomputed elements $\omega^{n/2^{\ceil{\log_2\ell}}}$ and $2^{-1}$, and an array $(\x{0},\dotsc,\x{\ell-1})$ containing $\TFT_{\omega,\ell}(\vec{a})$ for some vector $\vec{a}\in\A^\ell$.
		\Ensure the vector $\vec{a}$ written to $(\x{0},\dotsc,\x{\ell-1})$.
		\State $m\set\ceil{\log_2\ell}$, $v\set\ord_2\ell$
		\For{$k=0,1,\dotsc,m-2$}
			\State
				$q\set\floor{\ell/2^{k+1}}$
				\For{$j=0,\dotsc,2^k-1$}
					\State
						$\begin{pmatrix}
							\x{j}     \\
							\x{2^k+j}
						\end{pmatrix}
						\set
						\begin{pmatrix}
							1 &  1 \\
							1 & -1
						\end{pmatrix}
						\begin{pmatrix}
							\x{j}     \\
							\x{2^k+j}
						\end{pmatrix}$
				\EndFor
			\For{$(i,\alpha)\in\{(i,\omega^{-[2i]_p})\mid i\in\{1,\dotsc,q-1\}\}$}
				\For{$j=0,\dotsc,2^k-1$}
					\State
						$\begin{pmatrix}
							\x{2^k(2i  )+j} \\
							\x{2^k(2i+1)+j}
						\end{pmatrix}
						\set
						\begin{pmatrix}
							1      & 1       \\
							\alpha & -\alpha
						\end{pmatrix}
						\begin{pmatrix}
							\x{2^k(2i  )+j} \\
							\x{2^k(2i+1)+j}
						\end{pmatrix}$
				\EndFor
			\EndFor
		\EndFor
		\For{$k=m-2,m-3,\dotsc,v+1$}
			\State
				$q\set\floor{\ell/2^{k+1}}$,
				$r\set\ell-2^{k+1}q$,
				$q'\set q-2^{m-k-2}$,
				$\alpha\set\omega^{[2q]_p}$
			\If{$r>2^k$}
				\For{$j=r-2^k,\dotsc,2^k-1$}
					\State
						$\x{2^k(2q'+1)+j}
						\set
						\x{2^k(2q)+j}-\alpha\x{2^k(2q'+1)+j}$
				\EndFor
			\Else
				\For{$j=r,\dotsc,2^k-1$}
					\State
						$\x{2^k(2q')+j}
						\set
						2^{-1}(\x{2^k(2q')+j}+\alpha\x{2^k(2q'+1)+j})$
				\EndFor
			\EndIf
		\EndFor
		\For{$k=v,v+1,\dotsc,m-2$}
			\State
				$q\set\floor{\ell/2^{k+1}}$,
				$r\set\ell-2^{k+1}q$,
				$q'\set q-2^{m-k-2}$
			\If{$r>2^k$}
				\State $\alpha\set\omega^{-[2q]_p}$
				\For{$j=0,\dotsc,r-2^k-1$}
					\State
						$\begin{pmatrix}
							\x{2^k(2q)  +j} \\
							\x{2^k(2q+1)+j}
						\end{pmatrix}
						\set
						\begin{pmatrix}
							1      & 1       \\
							\alpha & -\alpha
						\end{pmatrix}
						\begin{pmatrix}
							\x{2^k(2q)  +j} \\
							\x{2^k(2q+1)+j}
						\end{pmatrix}$
				\EndFor
				\For{$j=r-2^k,\dotsc,2^k-1$}
					\State
						$\begin{pmatrix}
							\x{2^k(2q)  +j} \\
							\x{2^k(2q'+1)+j}
						\end{pmatrix}
						\set
						\begin{pmatrix}
							1      &  1      \\
							\alpha & -\alpha
						\end{pmatrix}
						\begin{pmatrix}
							\x{2^k(2q)  +j} \\
							\x{2^k(2q'+1)+j}
						\end{pmatrix}$
				\EndFor
			\Else
				\State $\alpha\set\omega^{[2q]_p}$
				\For{$j=0,\dotsc,r-1$}
					\State $\x{2^k(2q)+j}\set 2\x{2^k(2q)+j}-\alpha\x{2^k(2q'+1)+j}$
				\EndFor
				\For{$j=r,\dotsc,2^k-1$}
				\State $\x{2^k(2q')+j}\set 2\x{2^k(2q')+j}-\alpha\x{2^k(2q'+1)+j}$
				\EndFor\label{itft:push_down_left_only_end}
			\EndIf
		\EndFor
		\State $\alpha\set(2^{-1})^{m-1}$
		\For{$j=\ell-2^{m-1},\dotsc,2^{m-1}-1$}
			\State $\x{j}\set\alpha\x{j}$
		\EndFor
		\State $\alpha\set 2^{-1}\alpha$
		\For{$j=0,\dotsc,\ell-2^{m-1}-1$}
			\State
				$\begin{pmatrix}
					\x{j}         \\
					\x{2^{m-1}+j}
				\end{pmatrix}
				\set
				\alpha
				\begin{pmatrix}
					1 &  1 \\
					1 & -1
				\end{pmatrix}
				\begin{pmatrix}
					\x{j}         \\
					\x{2^{m-1}+j}
				\end{pmatrix}$	
		\EndFor
	\end{algorithmic}
\end{algorithm}

The twiddle factors in Algorithm~\ref{alg:itft} can be computed by the methods of Section~\ref{sec:twiddle_factors}. Consequently, by assuming that all multiplications by $2$ in the algorithm are implemented as additions, a proof similar to that of Lemma~\ref{lem:tft_complexity} provides the following bounds.

\begin{lemma}\label{lem:itft_complexity} Algorithm~\ref{alg:itft} performs at most $(\ell/2)\floor{\log_2\ell}+2\ell+\bigO(\log^2\ell)$ multiplications by powers of $\omega$, at most $2^{\ceil{\log_2\ell}}+\bigO(\log\log\ell)$ multiplications by powers of~$2^{-1}$, and at most $\ell\floor{\log_2\ell}+3\ell$ additions or subtractions.
\end{lemma}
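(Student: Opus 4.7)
The plan is to follow the proof of Lemma~\ref{lem:tft_complexity} phase by phase, reading Algorithm~\ref{alg:itft} as the inverse of the weighted variant of Algorithm~\ref{alg:tft} described just above its statement. I would reuse the notation of that proof --- $m=\ceil{\log_2\ell}$, $v=\ord_2\ell$, $q_k=\floor{\ell/2^{k+1}}$, $r_k=\ell-2^{k+1}q_k$, and $\ell=\sum_{k=0}^m 2^k\ell_k$ with $\ell_k\in\{0,1\}$ --- and split the algorithm into its four natural phases: the first loop (inverse butterflies for $k=0,\dotsc,m-2$), the second loop (the backwards ``restore''), the third loop (the inverse ``push-down''), and the final block that undoes the $2^k$ weighting and performs the level-$(m-1)$ inverse butterfly. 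For each phase I would count multiplications by powers of $\omega$, multiplications by powers of $2^{-1}$, and additions or subtractions separately, under the hypothesis that every multiplication by $2$ is implemented as an addition.

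The first loop is structurally identical to Lines~\ref{tft:remainder}--\ref{tft:remainder_end} of Algorithm~\ref{alg:tft} and uses the unweighted inverse butterfly (the weighting absorbs the $2^{-1}$), so the telescoping sums from the proof of Lemma~\ref{lem:tft_complexity} transfer directly to give at most $(\ell/2)\floor{\log_2\ell}+\bigO(\ell)+\bigO(\log^2\ell)$ multiplications by powers of $\omega$ and at most $\ell\floor{\log_2\ell}-(\ell-2^{m-1})$ additions, with no multiplications by $2^{-1}$. An iteration-by-iteration comparison of the second and third loops against Lines~\ref{tft:push_down_right}--\ref{tft:restore_end} shows that their per-iteration $\omega$-multiplication counts coincide with those of the TFT proof, contributing $\bigO(\ell)+\bigO(\log^2\ell)$ in total. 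The per-iteration addition counts also coincide, except that each iteration of the $r\leq 2^k$ branch of the third loop carries one extra addition (from the ``multiplication by $2$ as addition''), adding $\sum_{k=v+1}^{m-2}(1-\ell_k)2^k$ to the additions; and the same branch of the second loop introduces one new multiplication by $2^{-1}$ per iteration, contributing $\sum_{k=v+1}^{m-2}(1-\ell_k)(2^k-r_k)$ multiplications by $2^{-1}$.

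The final block does three things: it precomputes $(2^{-1})^{m-1}$ by repeated squaring, at a cost of $\bigO(\log m)=\bigO(\log\log\ell)$ multiplications by $2^{-1}$; it scales the $2^m-\ell$ entries $\x{\ell-2^{m-1}},\dotsc,\x{2^{m-1}-1}$ by $(2^{-1})^{m-1}$; and it runs a length-$(\ell-2^{m-1})$ scaled inverse butterfly contributing $2(\ell-2^{m-1})$ multiplications by $2^{-1}$ and $2(\ell-2^{m-1})$ additions. Summing the multiplications by $2^{-1}$ over the whole algorithm, the final block contributes $(2^m-\ell)+1+2(\ell-2^{m-1})+\bigO(\log\log\ell)=\ell+\bigO(\log\log\ell)$, while the second loop contributes at most $\sum_{k=v+1}^{m-2}(1-\ell_k)2^k$; using the identity $\sum_{k=v}^{m-1}2^k=2^m-2^v$ together with $\ell_v=1$ (and $\ell_{m-1}=1$ whenever $\ell$ is not a power of two), this sum is bounded above by $2^m-\ell-2^v$. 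The total is therefore at most $2^m+\bigO(\log\log\ell)$, matching the stated count. The extra additions from the third loop satisfy the same $2^m-\ell-2^v$ bound, and combining the per-phase addition counts with the $2(\ell-2^{m-1})$ additions from the final block yields at most $\ell\floor{\log_2\ell}+3\ell$ additions; the $\omega$-multiplication bound $(\ell/2)\floor{\log_2\ell}+2\ell+\bigO(\log^2\ell)$ then follows exactly as in the TFT proof.

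The main obstacle is the $2^{-1}$ bookkeeping: identifying the only places in Algorithm~\ref{alg:itft} where the multiplications by $2^{-1}$ absorbed by the weighting reappear (the $r\leq 2^k$ branch of the second loop together with the final block's two scalings), and verifying that their combined count telescopes to $2^{\ceil{\log_2\ell}}+\bigO(\log\log\ell)$ rather than something of order $\ell\log\ell$. The inequality $\sum_{k=v+1}^{m-2}(1-\ell_k)2^k\leq 2^m-\ell-2^v$ is the linchpin that makes both the $2^{-1}$ bound and the addition bound come out clean.
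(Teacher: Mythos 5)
Your decomposition into four phases and your plan to transcribe the counting from the proof of Lemma~\ref{lem:tft_complexity} is exactly what the paper intends when it says ``a proof similar to that of Lemma~\ref{lem:tft_complexity} provides the following bounds,'' and the bookkeeping of the $2^{-1}$-multiplications, which is the genuinely new ingredient, is set up correctly: the $2^m-\ell$ scalings and the $2(\ell-2^{m-1})$ entries of the final butterfly, plus the second-loop contribution $\sum_{k=v+1}^{m-2}(1-\ell_k)(2^k-r_k)$, plus the $\bigO(\log\log\ell)$ cost of the repeated-squaring for $(2^{-1})^{m-1}$, do indeed telescope to $2^{\ceil{\log_2\ell}}+\bigO(\log\log\ell)$.

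There is, however, a small slip in the addition count that you should repair. The third loop runs for $k=v,v+1,\dotsc,m-2$, and its $r\leq 2^k$ branch is taken not only when $\ell_k=0$ but also at $k=v$ (where $r_v=2^v$ and $\ell_v=1$), contributing $2^v$ further multiplications-by-$2$-as-additions that your sum $\sum_{k=v+1}^{m-2}(1-\ell_k)2^k$ omits. The correct extra-addition count over the second and third loops is therefore $2^v+\sum_{k=v+1}^{m-2}(1-\ell_k)2^k=2^m-\ell$, not the $2^m-\ell-2^v$ you state. Carrying the corrected value through the final sum gives at most $\ell\floor{\log_2\ell}+\ell+2^m-2^{v+2}$ additions, which is still $\leq\ell\floor{\log_2\ell}+3\ell$ because $2^m<2\ell$; so the conclusion stands, but the intermediate identity you call the linchpin needs to be the $2^m-\ell$ version, and the $2^m-\ell-2^v$ bound should be kept only for the $2^{-1}$-multiplication count coming from the second loop.
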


The assumption that $2$ is a unit in $\A$ may be replaced by the weaker assumption that $2$ is not a zero-divisor in $\A$ if all multiplications by powers of $2^{-1}$ in Algorithm~\ref{alg:itft} are replaced by \emph{in-place} evaluations of the corresponding maps $\Div_{2^k}:2^k\A\rightarrow\A$ defined by $2^ka\mapsto a$ for $a\in\A$.

\bibliographystyle{amsplain}

\end{document}